\documentclass[10pt,journal,doublecolumn]{IEEEtran}
\usepackage{amsmath,amsfonts}
\usepackage{algorithm}
\usepackage{algpseudocode}
\usepackage{array}
\usepackage{textcomp}
\usepackage{stfloats}
\usepackage{url}
\usepackage{verbatim}
\usepackage{float}
\usepackage{amsthm}
\usepackage{amssymb}
\usepackage{subfigure}
\usepackage{graphicx}
\usepackage{subfigure}
\usepackage{cite}
\usepackage{bm}       
\usepackage{amsopn}   
\usepackage{amssymb}  
 
\newtheorem{lemma}{Lemma}
\newtheorem{proposition}{Proposition}
\newtheorem{theorem}{Theorem}
\usepackage{color}
\theoremstyle{remark}

\begin{document}
% \begin{CJK}{UTF8}{gbsn}  
% \begin{CJK}{UTF8}{}  

\title{An Enhanced MNOMP for Line Spectrum Estimation and Detection}

\author{Yulin~Jiang, Jiang~Zhu, Fengzhong~Qu and Yonina C. Eldar
\thanks{
	Yulin~Jiang, Jiang~Zhu and Fengzhong Qu are with the State Key Laboratory of Ocean Sensing, Ocean College \& ZJU-Hangzhou Global Scientific and Technological Innovation Center, Zhejiang University, Hangzhou 310000, China, and are also with the Zhejiang Key Laboratory of R\&D and Application of Cutting-edge Scientific Instruments, Ocean College, Zhejiang University, Zhoushan 316021, China (email: \{jyulin, jiangzhu16, jimqufz\}@zju.edu.cn). Yonina C. Eldar is with the Faculty of Electrical and Computer Engineering, Northeastern University, Boston, MA 02115 USA (email: y.eldar@northeastern.edu). This work was supported by the National Natural Science Foundation of China under Grant 62371420 and 61901415.}}

% The paper headers
% \markboth{Journal of \LaTeX\ Class Files,~Vol.~14, No.~8, August~2021}%
% {Shell \MakeLowercase{\textit{et al.}}: A Sample Article Using IEEEtran.cls for IEEE Journals}

% \IEEEpubid{0000--0000/00\$00.00~\copyright~2021 IEEE}
% Remember, if you use this you must call \IEEEpubidadjcol in the second
% column for its text to clear the IEEEpubid mark.

\maketitle

\begin{abstract}
    Multisnapshot Newtonized orthogonal matching pursuit (MNOMP) incorporates Newton's method into OMP to avoid the off-grid issues, achieve high accuracy, high resolution and fast line spectrum estimation with multiple measurement vectors. It employs the generalized likelihood ratio test (GLRT) with a constant false alarm rate (CFAR) criterion to determine the number of sinusoids. In this paper, we develop an enhanced MNOMP (EMNOMP) by analyzing the statistical distribution of the exact GLRT statistic. In contrast to MNOMP, which approximates the GLRT by restricting the search to discrete Fourier transform (DFT) grid frequencies, EMNOMP instead solves the GLRT over the continuous frequency domain. The key technical novelty is the derivation of the false alarm probability for this continuous-domain GLRT using chi-squared random field theory and the resulting closed-form threshold via the Lambert W function. The signal-to-noise ratio (SNR) gain of EMNOMP relative to MNOMP is derived and analyzed in depth. Numerical simulations validate the theoretical analysis and the effectiveness of EMNOMP compared to MNOMP.
\end{abstract}

\begin{IEEEkeywords}
	Line spectrum estimation and detection, Newtonized OMP, generalized likelihood ratio test, constant false alarm rate, chi-squared random fields.
\end{IEEEkeywords}

\section{Introduction}

Line spectrum estimation and detection (LSE\&D) aims to extract the frequencies of sinusoids from their mixture, arising in applications such as direction-of-arrival (DOA) estimation, range and velocity estimation in millimeter-wave radar, etc. Classical spectral estimation methods include periodogram \cite{Stoica2005} and subspace-based approaches \cite{Schmidt86music}.

Compressed sensing (CS) algorithms exploit the sparsity of sources in the spatial domain and have been widely used for DOA estimation, avoiding the need for good initialization \cite{Nehorai1992TSPNM}, while achieving super-resolution.
In \cite{Malioutov2005TSP}, an $\ell_1$-norm penalty is introduced to promote sparsity. However, since DOAs are continuous-valued parameters, CS-based methods suffer from model mismatch when the parameter space is discretized onto a finite grid \cite{mismatch2011TSP}. To overcome this issue, grid refinement and gridless approaches have been proposed. Newtonized orthogonal matching pursuit (NOMP) \cite{Madhow16TSP} and atomic norm minimization (ANM) \cite{bhaskar2013atomic, Yangzaireview} are two representative methods. NOMP is a greedy algorithm that sequentially estimates frequencies and achieves high estimation accuracy through Newton refinement and super-resolution capability, while determining the model order via constant false alarm rate (CFAR) detection. The multisnapshot NOMP (MNOMP) further addresses the LSE\&D problem in multiple measurement vector (MMV) scenario \cite{MNOMP19SP}. In contrast, ANM performs gridless estimation by solving a semidefinite program (SDP), which is computationally expensive. Although ANM enjoys strong theoretical guarantees, such as exact recovery in the noiseless case under sufficient frequency separation, it requires careful selection of the regularization parameter \cite{Yangzaireview}. Note that most existing methods focus on frequency estimation, whereas NOMP also incorporates CFAR-based detection.

Motivated by the high estimation accuracy and CFAR property of MNOMP, we propose an enhanced MNOMP (EMNOMP) algorithm. EMNOMP is identical to MNOMP except that the detection step uses the exact continuous-domain generalized likelihood ratio test (GLRT) rather than the discrete Fourier transform (DFT) grid approximation. This replacement provides an additional SNR gain and improves weak signal detection performance while maintaining the same false alarm probability. To enable practical implementation, the false alarm probability is characterized via the application of Lemma~\ref{lemma_asymptotic}, a well-known result from the statistic literature \cite{worsley1994local}, to the MNOMP detection problem, and in particular the derivation of $\Lambda_{\rm min}$ (\ref{Lambdamin}) via the phase-factor reparameterization trick, which tightens the bound relative to a naive application of Lemma~\ref{lemma_asymptotic}. Secondly, compared with the detector in MNOMP, the advantage of the proposed GLRT in EMNOMP is quantitatively analyzed, revealing a theoretical maximum signal-to-noise ratio (SNR) gain of 3.92 dB under additive white Gaussian noise (AWGN). Finally, numerical simulations are conducted to validate the theoretical results and demonstrate the effectiveness of the proposed method.

\section{Problem Setup and a Brief Review of MNOMP}
The multisnapshot LSE\&D model is
\begin{align} \label{lsemodel}
	\mathbf{y}(t)
	=\sum_{k=1}^{K} \mathbf{a}(\omega_k) x_k(t) + \mathbf{w}(t),\quad t=1,2,\ldots,T,
\end{align}
where $\mathbf{y}(t)\in {\mathbb C}^N$ denotes the measurement vector at the $t$th snapshot, $N$ is the number of measurements, $T$ is the number of snapshots, and $K$ is the number of sinusoids. The parameters $\omega_k\in [0,2\pi)$ and $x_k(t)\in \mathbb C$ denote the angular frequency and complex amplitude at the $t$th snapshot of the $k$th sinusoid, respectively. The array manifold vector is 
\begin{align}
	\mathbf{a}(\omega) = \left[1, \mathrm{e}^{\mathrm{j}\omega}, \ldots, \mathrm{e}^{\mathrm{j}(N-1)\omega} \right]^{\mathrm{T}}.
\end{align}
The noise $\mathbf{w}(t)\sim {\mathcal {CN}}({\mathbf 0},\sigma^2{\mathbf I}_N)$ represents AWGN, independent across snapshots, where $\sigma^2$ is assumed known. The objective is to determine $K$ and estimate the corresponding frequencies and amplitudes.

A brief review of MNOMP is provided. MNOMP is a greedy approach that performs LSE\&D by successively removing the contributions of previously estimated sinusoids. Let $\mathcal P = \{(\hat{\omega}_l,\{\hat{x}_l(t)\}_{t=1}^T),l=1,\ldots,k\}$ denote a set of estimates of the parameters of the sinusoids in the mixture. The \emph{residual} measurement is defined as
\begin{align}
\mathbf{y}_{\backslash \mathcal{P}}(t) = \mathbf{y}(t) - \sum_{l=1}^{k} \hat{x}_l(t)\mathbf{a}(\hat{\omega}_l),
\end{align}
where $\mathcal{P}$ denotes the set of estimated sinusoidal parameters, and the notation $\backslash \mathcal{P}$ indicates that the corresponding sinusoids have been removed. At the first iteration, $\mathcal P=\emptyset$, and MNOMP detects the strongest signal component while ignoring the presence of other components. The residual signal is  $\mathbf{y}_{\backslash{\mathcal P}}(t) =\mathbf{y}(t)$.
Based on the residual signal $\mathbf{y}_{\backslash{\mathcal P}}(t)$, we determine whether an additional target is present by formulating the following binary hypothesis testing (BHT) problem
\begin{align} \label{eq_first_BHT}
	\begin{cases}
		{\mathcal{H}}_{0}: 
		\mathbf{y}_{\backslash{\mathcal P}}(t) = \mathbf{z}(t) , \quad &t = 1,\cdots,T, \\
		{{\mathcal{H}}_{1}}: 
		\mathbf{y}_{\backslash{\mathcal P}}(t) = \mathbf{a}(\omega_0) x_0(t)  + \mathbf{z}(t), \quad &t = 1,\cdots,T,
	\end{cases}
\end{align}
where ${\mathbf z}(t)\sim {\mathcal{CN}}({\mathbf 0},\sigma^2{\mathbf I}_N)$, and $\omega_0$ and $x_0(t)$ denote the unknown frequency and amplitude, respectively. A GLRT is then employed to determine whether a sinusoid is present. If the GLRT statistic is below the detection threshold, we fail to reject $\mathcal{H}_0$, indicating that no additional target is present, and the algorithm terminates. If the GLRT statistic exceeds the detection threshold, we decide $\mathcal{H}_1$, indicating the presence of a target. The parameters of the detected component are then estimated, and its contribution is removed from the measurements, i.e., the set $\mathcal{P}$ and the residual signal $\mathbf{y}_{\backslash{\mathcal P}}(t)$ are updated. Based on the updated residual signal, a new BHT problem similar to (\ref{eq_first_BHT}) is formulated together with its corresponding GLRT detector, and the above detection and estimation procedure is repeated.
For model (\ref{eq_first_BHT}), the GLRT statistic is given by
\begin{align} \label{GLRTstatistic}
	&\underset{\omega,\{x(t)\}_{t=1}^T}{\operatorname{max}}
	\sum\limits_{t=1}^T \ln \frac{ p(\mathbf{y}_{\backslash{\mathcal P}}(t)|{\mathcal H}_1)}{ p(\mathbf{y}_{\backslash{\mathcal P}}(t)|{\mathcal H}_0)} \notag \\ 
	&=\frac{1}{\sigma^2}\sum\limits_{t=1}^T\left(2 \Re \left\{
	 x^{*}(t) \mathbf{a}^\mathrm{H}(\omega)\mathbf{y}_{\backslash{\mathcal P}}(t)\right\}-N|x(t)|^2
	\right).
\end{align}
For a fixed $\omega$, the optimal ${x}(t)$ achieving the maximum is $\hat{x}(t)=\mathbf{a}^{\rm H}(\omega){\mathbf y}_{\backslash{\mathcal P}}(t)/N$. Substituting $\hat{x}(t)$ into (\ref{GLRTstatistic}) and simplifying yields
\begin{align}\label{GLRT}
	T_{\rm GLRT}
	=\underset{\omega}{\operatorname{max}}
	\sum_{t=1}^{T} \bigg| \sqrt{\frac{2}{N\sigma^2}}
	\mathbf{a}^\mathrm{H}(\omega)\mathbf{y}_{\backslash{\mathcal P}}(t) \bigg|^2.
\end{align}
For simplicity, define $n_t(\omega) = \sqrt{\frac{2}{N\sigma^2}} \mathbf{a}^\mathrm{H}(\omega)\mathbf{y}_{\backslash{\mathcal P}}(t)$ and $L_{\backslash{\mathcal{P}}}(\omega) = \sum_{t=1}^T |n_t(\omega)|^2$. For a fixed $\omega$, $n_t(\omega)$ follows $\mathcal{CN}(0,2)$ under $\mathcal{H}_0$, and $\mathcal{CN}(\sqrt{\frac{2N}{\sigma^2}} \varphi(\omega_0-\omega) x(t),2)$ under $\mathcal{H}_1$, where
\begin{align} \label{phi_coe}
	\varphi(\omega) = 
		{\rm e}^{{\rm j} \frac{(N-1) }{2} \omega}
		\sin \left(\frac{N\omega}{2}\right) /
		\left(N \sin \left( \frac{\omega}{2} \right) \right).
\end{align}
Since $n_t(\omega)$ is independent across snapshots, the statistic $L_{\backslash{\mathcal{P}}}(\omega)$ follows 
\begin{align} \label{distEMNOMP}
	\begin{cases}
		{\mathcal{H}}_{0}: L_{\backslash{\mathcal{P}}}(\omega) \sim \chi^2_{2T}, \\
		{\mathcal{H}}_{1}: L_{\backslash{\mathcal{P}}}(\omega) \sim \chi^{\prime 2}_{2T}(\psi),
	\end{cases}
\end{align}
where $\chi^2_{2T}$ denotes the chi-squared distribution with $2T$ degrees of freedom, and $\chi^{\prime 2}_{2T}(\psi)$ denotes the noncentral chi-squared distribution with $2T$ degrees of freedom and noncentrality parameter $\psi$. The noncentrality parameter $\psi$ is given by
\begin{align} \label{noncent_paramter}
	\psi 
	= \frac{2N}{\sigma^2} |\varphi(\omega_0-\omega)|^2
	\sum_{t=1}^{T}
	\big| x(t) \big|^2 
	= 2 \beta \, {\rm SNR}_{\rm int},
\end{align}
where $\beta = |\varphi(\omega_0-\omega)|^2$ denotes the frequency mismatch coefficient and ${\rm SNR}_{\rm int} = N \sum_{t=1}^{T} | x(t)|^2 / \sigma^2$  denotes the integrated SNR.
The false alarm probability is defined as the probability that the detector $T_{\rm GLRT} = \max_\omega L_{\backslash{\mathcal{P}}}(\omega)$ exceeds the detection threshold $\tau$ under the null hypothesis $\mathcal{H}_0$, i.e.,
\begin{align}
	P_{\rm FA} = \Pr \left( 
		\underset{\omega}{\operatorname{max}} ~ L_{\backslash{\mathcal{P}}}(\omega) > \tau;\mathcal{H}_0 
		\right).
\end{align}
For a prescribed false alarm probability, the corresponding detection threshold can be determined, and the GLRT detector in (\ref{GLRT}) achieves CFAR detection over the continuous frequency domain.
Under the alternative hypothesis $\mathcal{H}_1$, the target is successfully detected if $L_{\backslash{\mathcal{P}}}(\omega) > \tau$ at the true frequency $\omega = \omega_0$. In this case, $\varphi(0) = 1$, $\beta = 1$ and $\psi = 2 \, {\rm SNR}_{\rm int}$ according to (\ref{phi_coe}) and (\ref{noncent_paramter}). Therefore, the detection probability of this target achieved by the GLRT detector in (\ref{GLRT}) is defined as
\begin{align}
	P_{\rm D} = \Pr \left( 
		L_{\backslash{\mathcal{P}}}(\omega_0) > \tau;\mathcal{H}_1 
		\right),
\end{align}
which corresponds to the right-tail probability of the noncentral chi-squared distribution and is given by
\begin{align} \label{pd_glrt}
	P_{\rm D} 
	= Q_T \left(\sqrt{\psi},\sqrt{\tau} \right) 
	= Q_T \left(\sqrt{2 \, {\rm SNR}_{\rm int}},\sqrt{\tau} \right),
\end{align}
where $Q_T$ denotes the Marcum $Q$-function.

For MNOMP, the GLRT is implemented approximately by replacing the maximum over the continuous frequency interval with the DFT grids, i.e., 
\begin{align} \label{T_MNOMP}
	T_{\rm MNOMP}=\underset{\omega_g \in \Omega_{\rm DFT}}{\operatorname{max}}
	L_{\backslash{\mathcal{P}}}(\omega_g).
\end{align}
where $\Omega_{\rm DFT}=\{0,2\pi/N,2\times 2\pi/N,(N-1)\times 2\pi/N\}$. Similarly, for a fixed $\omega_g$, the statistic $L_{\backslash{\mathcal{P}}}(\omega_g)$ follows
\begin{align} \label{distMNOMP}
	\begin{cases}
		{\mathcal{H}}_{0}: L_{\backslash{\mathcal{P}}}(\omega_g) \sim \chi^2_{2T}, \\
		{\mathcal{H}}_{1}: L_{\backslash{\mathcal{P}}}(\omega_g) \sim \chi^{\prime 2}_{2T}(\psi_{g}),
	\end{cases}
\end{align}
where the noncentrality parameter $\psi_{g}$ is given by
\begin{align} \label{noncent_paramter_grid}
	\psi_g
	= \frac{2N}{\sigma^2} |\varphi(\omega_0-\omega_g)|^2
	\sum_{t=1}^{T}
	\big| x(t) \big|^2 
	= 2 \beta_g \, {\rm SNR}_{\rm int}.
\end{align}
Here, the frequency mismatch coefficient is given by
\begin{align} \label{beta_g}
	\beta_g 
    &= |\varphi(\omega_0-\omega_g)|^2 \notag \\
	&= \bigg|
		\sin \left(\frac{N (\omega_0 - \omega_{g})}{2}\right) /
	\left(N \sin \left( \frac{\omega_0 - \omega_{g}}{2} \right) \right)
	\bigg|^2.
\end{align}
For MNOMP, the false alarm probability under the null hypothesis $\mathcal{H}_0$ can be expressed as
\begin{align} \label{pfa_grid}
	P_{\mathrm{FA}}^{\rm M}
	&= \Pr \left(\max_{g = 0,\cdots,N-1} L_{\backslash{\mathcal{P}}}(\omega_g) > \tau^{\rm M} ; \mathcal{H}_0 \right) \notag \\
	&= 1 - \Pr \left(\max_{g = 0,\cdots,N-1} L_{\backslash{\mathcal{P}}}(\omega_g) \le \tau^{\rm M} ; \mathcal{H}_0 \right) \notag \\
	&= 1 - \Pr\!\left( L_{\backslash{\mathcal{P}}}(\omega_g) \le \tau^{\rm M} ; \mathcal{H}_0 \right)^N \notag \\
	&= 1 - F_{\chi^2_{2T}}^N(\tau^{\rm M}),
\end{align}
where the superscript ${\rm M}$ denotes MNOMP, $F_{\chi^2_{2T}}(\cdot)$ denotes the cumulative distribution function (CDF) of the chi-squared distribution with $2T$ degrees of freedom,  $L_{\backslash{\mathcal{P}}}(\omega_g)$ and $L_{\backslash{\mathcal{P}}}(\omega_{g'})$ are independent for $g\neq g'$ \cite{MNOMP19SP}. 
For a given $P_{\mathrm{FA}}^{\rm M}$, the detection threshold can be determined to achieve CFAR detection, expressed as
\begin{align} \label{tau_MNOMP}
	\tau^{\rm M} = F_{\chi^2_{2T}}^{-1}\left((1-P_\mathrm{FA}^{\rm M})^{\frac{1}{N}} \right),
\end{align}
where $F_{\chi^2_{2T}}^{-1}(\cdot)$ denotes the inverse function of $F_{\chi^2_{2T}}(\cdot)$.
Under the alternative hypothesis $\mathcal{H}_1$, the target is successfully detected if $L_{\backslash{\mathcal{P}}}(\omega_{g^*}) > \tau$, where $\omega_{g^*}$ denotes the DFT grid frequency closest to the true frequency $\omega_0$, i.e., $\omega_{g^*}=\underset{\omega_g\in\Omega_{\rm DFT}}{\operatorname{argmin}}|\omega_0-\omega_g|$. In this case, $\beta_{g^*} = |\varphi(\omega_0 - \omega_{g^*})|^2$ and $\psi_{g^*} = 2 \beta_{g^*} \, {\rm SNR}_{\rm int}$. Therefore, the detection probability of this target using approximate GLRT detector (\ref{T_MNOMP}) is 
\begin{align} \label{pd_mnomp}
	P_{\rm D}^{\rm M}
	&= \Pr \left( 
		L_{\backslash{\mathcal{P}}}(\omega_{g^*}) > \tau^{\rm M};\mathcal{H}_1 
		\right) \notag \\
	&= Q_T \left(\sqrt{\psi_{g^*}},\sqrt{\tau^{\rm M}} \right) \notag \\
	&= Q_T \left(\sqrt{2 \beta_{g^*} \, {\rm SNR}_{\rm int}},\sqrt{F_{\chi^2_{2T}}^{-1}\left((1-P_\mathrm{FA}^{\rm M})^{\frac{1}{N}} \right)} \right).
\end{align}

For a specified false alarm probability, the detection threshold can be determined. If $T_{\rm MNOMP}$ does not exceed the threshold, the null hypothesis ${\mathcal H}_0$ is decided and the algorithm terminates; otherwise, the presence of a sinusoid is declared. MNOMP then performs a search over the oversampled grid 
\begin{align} \label{osGrids}
    \Omega = \left\{ \frac{2\pi n}{\gamma_{\rm os} N} \,\middle|\, n=0,\ldots,\gamma_{\rm os}N-1 \right\},    
\end{align}
where $\gamma_{\rm os}$ denotes the oversampling factor, to obtain coarse estimates, i.e., $\hat{\omega} = \underset{\omega \in \Omega}{\operatorname{argmax}}~  L_{\backslash{\mathcal{P}}}(\omega)$ and $\hat{x}(t)=\mathbf{a}^{\rm H}(\hat{\omega}){\mathbf y}_{\backslash{\mathcal P}}(t)/N$. Newton refinement is then applied to mitigate the off-grid effect and improve the estimation accuracy. Specifically, define $S(\omega,\{x(t)\}_{t=1}^T) = \frac{1}{\sigma^2}\sum_{t=1}^T\left(2 \Re \left\{
 x^{*}(t) \mathbf{a}^\mathrm{H}(\omega)\mathbf{y}_{\backslash{\mathcal P}}(t)\right\}-N|x(t)|^2 \right)$ according to (\ref{GLRTstatistic}). The frequency and amplitude are updated as $\hat{\omega}^\prime = \hat{\omega} - {\dot{S}(\hat{\omega},\{\hat{x}(t)\}_{t=1}^T)} / {\ddot{S}(\hat{\omega},\{\hat{x}(t)\}_{t=1}^T)}$ and $\hat{x}^\prime(t) = \mathbf{a}^{\rm H}(\hat{\omega}^\prime){\mathbf y}_{\backslash{\mathcal P}}(t)/N$, where $\dot{S}$ and $\ddot{S}$ denote the first- and second-order derivatives of $S(\omega,\{x(t)\}_{t=1}^T)$ with respect to $\omega$, respectively. This refinement can be repeated $R_s$ times to further improve the estimation accuracy. Let $\hat{\omega}_1$ and $\hat{x}_1(t)$ denote the frequency and amplitude estimates obtained at the first iteration. The parameter set is then updated as $\mathcal P=\{(\hat{\omega}_1,\{\hat{x}_1(t)\}_{t=1}^T)\}$, and the residual signal is updated as $\mathbf{y}_{\backslash{\mathcal P}}(t)=\mathbf{y}(t)- \hat{x}_l(t) \mathbf{a}(\hat{\omega}_l) $. The above procedure is repeated until $T_{\rm MNOMP}$ falls below the threshold. For completeness, the entire MNOMP is summarized in Algorithm~\ref{algorithm_MNOMP}. For further details, please refer to \cite{Madhow16TSP, MNOMP19SP}.

\begin{algorithm}[htbp]
        \caption{MNOMP Algorithm}
        \label{algorithm_MNOMP}
        \begin{algorithmic}[1]
        \State \textbf{Input:} $\{\mathbf{y}(t)\}_{t=1}^T, \sigma^2, P_\mathrm{FA}, R_s, R_c, \gamma_{\rm os}$
        % \State \textbf{Output:} $\mathcal{P}_m$
        \State \textbf{Initialize:} $\tau^{\rm M}, m \gets 0, \mathcal{P}_{0} \gets \{\}$, 
        \State \hspace*{4em} $\{\mathbf{y}_{\backslash{\mathcal{P}_0}}(t)\}_{t=1}^T \gets \{\mathbf{y}(t)\}_{t=1}^T$,
        \State \hspace*{4em} $\Omega_{\rm DFT} = \left\{ \frac{2\pi n}{N} \,\middle|\, n=0,\ldots,N-1 \right\}$
        \State \hspace*{4em} $\Omega = \left\{ \frac{2\pi n}{\gamma_{\rm os} N} \,\middle|\, n=0,\ldots,\gamma_{\rm os}N-1 \right\}$
		\While{$\max_{\omega \in \Omega_{\rm DFT}}  L_{\backslash{\mathcal{P}_m}}(\omega) > \tau^{\rm M} $}
            \State $m \gets m + 1$
			% \State \textbf{Coarse Detection Stage:}
				\State $\hat{\omega} = \underset{\omega \in \Omega}{\operatorname{argmax}} ~  L_{\backslash{\mathcal{P}_{m-1}}}(\omega) $
				\State $\hat{x}(t) \gets \mathbf{a}^{\rm H}(\hat{\omega}) \mathbf{y}_{\backslash{\mathcal P}_{m-1}}(t)/N$, for $t=1,\ldots,T$
				\State $\mathcal{P}_{m}^\prime \gets \mathcal{P}_{m-1} \cup \{(\hat{\omega},\{\hat{x}(t)\}_{t=1}^T)\}$
            \State \textbf{Single Refinement:} Refine $(\hat{\omega},\{\hat{x}(t)\}_{t=1}^T)$ using single frequency Newton update algorithm ($R_s$ Newton steps) to obtain improved estimates $(\hat{\omega}^\prime,\{\hat{x}^\prime(t)\}_{t=1}^T)$.
            \State $\mathcal{P}_{m}^{\prime\prime} \gets \mathcal{P}_{m-1} \cup \{(\hat{\omega}^\prime,\{\hat{x}^\prime(t)\}_{t=1}^T)\}$
			\State \textbf{Cyclic Refinement:} Refine parameters in $\mathcal{P}_{m}^{\prime\prime}$ one at a time: For each $(\omega,\{x(t)\}_{t=1}^T) \in \mathcal{P}_{m}^{\prime\prime}$, we define a subset excluding this specific element, denoted as $\mathcal{D} = \mathcal{P}_{m}^{\prime\prime} \setminus \{(\omega,\{x(t)\}_{t=1}^T)\}$. we treat $\mathbf{y}_{\backslash \mathcal{D}}(t)$ as the measurement $\mathbf{y}(t)$, and apply single frequency Newton update algorithm. We perform $R_c$ rounds of cyclic refinements. Let $\mathcal{P}_{m}^{\prime\prime\prime}$ denote the new set of parameters.
            \State \textbf{Update:} update all amplitudes in $\mathcal{P}_{m}^{\prime\prime\prime}$ by least squares: $\mathbf{A} \triangleq [\mathbf{a}(\omega_1),\ldots,\mathbf{a}(\omega_m)]$, $\{\omega_l\}$ are the frequencies in $\mathcal{P}_{m}^{\prime\prime\prime}$. And $[x_1(t),\ldots,x_m(t)]^{\rm T} = \mathbf{A}^{+} \mathbf{y}(t)$, $t=1,\ldots,T$, where $\mathbf{A}^{+}$ denotes the pseudoinverse of $\mathbf{A}$.
            \State Let $\mathcal{P}_{m}$ denote the new set of parameters.
		\EndWhile
		\State \textbf{Return:} $\mathcal{P}_m$
        \end{algorithmic}
\end{algorithm}

\section{The Enhanced MNOMP and its Comparison with MNOMP}

The EMNOMP algorithm follows the same framework as MNOMP, except that it implements the GLRT in \eqref{GLRT} exactly. However, analyzing the false alarm probability of the GLRT in \eqref{GLRT} is challenging. Gaussian random field and chi-squared random field theories provide useful tools for asymptotically characterizing the false alarm probability in continuous-parameter detection problems \cite{nadler2011model,trottier2024SPL,grebien2024super}. In particular, when the underlying Gaussian random fields satisfy identical regularity conditions over a compact parameter space, asymptotic expressions for the excursion probability of the associated chi-squared random field are available \cite{worsley1994local}. Applying this result, we derive an asymptotic upper bound (AUB) on the false alarm probability of the GLRT in \eqref{GLRT}. To facilitate the subsequent derivations, the following lemma is first introduced.

\begin{lemma} [{ \cite[Corollary 3.4]{worsley1994local} }] \label{lemma_asymptotic}
	Let $u_1(\bm{\kappa}),\cdots,u_{2T}(\bm{\kappa})$ be independent, identically distributed, homogeneous, real-valued Gaussian random fields on a compact domain $\mathcal{S} \in \mathbb{R}^Q$ with zero mean and unit variance. Define the chi-squared field $L(\bm{\kappa})$ as
	\begin{align}
		L(\bm{\kappa}) = \sum_{t=1}^{2T} u_t^2(\bm{\kappa}),
		\quad \bm{\kappa}=[\kappa_1,\cdots,\kappa_Q]^\mathrm{T} \in \mathcal{S},
	\end{align}
	whose marginal distribution at each $\bm{\kappa}$ is $\chi^2_{2T}$ with $2T$ degrees of freedom. The asymptotic upper bound for the probability of global maximum above the threshold value $\tau$ is
	\begin{align} \label{asymptotic_expression_theory}
		\Pr &\left(
		\max_{\bm{\kappa}}~   
		L(\bm{\kappa}) \ge \tau
		\right) 
		\sim \frac{
			\mu(\mathcal{S})  
			\det(\mathbf{\Lambda})^{\frac{1}{2}}
			\tau^{T - \frac{Q}{2}}
			\mathrm{e}^{-\frac{\tau}{2}}
		}{
			(2\pi)^{\frac{Q}{2}} 
			2^{T-1} 
			\Gamma(T) 
		}
		\tau^{Q-1} \notag \\
		&=\int_{\mathcal S}\frac{
			\det(\mathbf{\Lambda})^{\frac{1}{2}}
			\tau^{T - \frac{Q}{2}}
			\mathrm{e}^{-\frac{\tau}{2}}
		}{
			(2\pi)^{\frac{Q}{2}} 
			2^{T-1} 
			\Gamma(T) 
		}
		\tau^{Q-1}{\rm d}\boldsymbol\kappa, \quad \text{as}~ \tau \to \infty,
	\end{align}
	where $\mu(\mathcal{S})$ is the Lebesgue measure of the set $\mathcal{S}$, $\Gamma(\cdot)$ denotes the Gamma function, and $\bm{\Lambda}=\mathrm{var}(\partial u_t(\bm{\kappa}) / \partial \bm{\kappa})$ is the $Q \times Q$ covariance matrix of the partial derivatives of $u_t(\bm{\kappa})$ with the $(i,j)$-th element $\lambda_{ij}=\mathrm{cov}(\partial u_t(\bm{\kappa}) / \partial \kappa_i,\partial u_t(\bm{\kappa}) / \partial \kappa_j)$. 
\end{lemma}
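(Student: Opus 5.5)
This lemma is Worsley's Corollary 3.4, quoted rather than derived. The natural proof uses the Euler characteristic heuristic for excursion sets of smooth random fields, specialised to $\chi^2$ fields.

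\textbf{Step 1: reduce to Euler characteristic densities.} For smooth homogeneous fields, $\Pr(\max_{\bm\kappa} L(\bm\kappa)\ge\tau)$ is approximated, up to terms exponentially smaller as $\tau\to\infty$, by $\mathbb{E}[\chi(A_\tau)]$. Here $A_\tau=\{\bm\kappa\in\mathcal S: L(\bm\kappa)\ge\tau\}$ is the excursion set and $\chi$ is its Euler characteristic. The justification is that for large $\tau$ the excursion set is, with overwhelming probability, either empty or a union of small simply connected blobs. Each such blob contributes exactly $1$ to $\chi$.

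\textbf{Step 2: compute $\mathbb{E}[\chi(A_\tau)]$.} Use Morse theory, or the Hadwiger/Kac--Rice formula. This gives
$$\mathbb{E}[\chi(A_\tau)]=\sum_{d=0}^{Q}\mathcal L_d(\mathcal S)\,\rho_d(\tau),$$
with intrinsic volumes $\mathcal L_d$ and EC densities $\rho_d$. The top-dimensional term is $\mu(\mathcal S)\det(\bm\Lambda)^{1/2}\rho_Q(\tau)$.

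To get $\rho_Q$, write $\nabla L=2\sum_t u_t\nabla u_t$ and $\nabla^2 L=2\sum_t(\nabla u_t\nabla u_t^{\rm T}+u_t\nabla^2u_t)$. Condition on $L=\tau$ and $\nabla L=0$. By homogeneity, $u_t$ is independent of $\nabla u_t$ at the same point, and $\mathrm{cov}(u_t,\nabla^2u_t)=-\bm\Lambda$. The Kac--Rice expectation $\mathbb E[\det(-\nabla^2L)\,\mathbf 1\{L\ge\tau\}\mid\nabla L=0]$ then reduces to a polynomial in $\tau$ times the $\chi^2_{2T}$ density, $\tau^{T-1}e^{-\tau/2}/(2^T\Gamma(T))$.

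\textbf{Step 3: extract the leading term.} As $\tau\to\infty$ the highest-degree part of that polynomial dominates, and it comes from the $u_t\nabla^2u_t$ terms, each contributing a factor of order $\tau$. Collecting the constants gives
$$\rho_Q(\tau)\sim\frac{\tau^{T-Q/2}\,e^{-\tau/2}\,\tau^{Q-1}}{(2\pi)^{Q/2}\,2^{T-1}\,\Gamma(T)}.$$
The lower-dimensional boundary terms $d<Q$ carry smaller powers of $\tau$, so they are negligible asymptotically. Substituting $\rho_Q$ into the top-dimensional term yields the displayed formula. Writing $\mu(\mathcal S)\det(\bm\Lambda)^{1/2}$ as $\int_{\mathcal S}\det(\bm\Lambda)^{1/2}\,{\rm d}\bm\kappa$ gives the integral form; that form is just the constant-integrand rewriting, which the paper later exploits after reparameterisation.

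\textbf{Main obstacle.} The hard part is rigorously justifying Step 1, i.e.\ that $\mathbb{E}[\chi(A_\tau)]$ matches the excursion probability to leading order. Worsley handles this with regularity conditions (smoothness, nondegeneracy of $\bm\Lambda$) together with known results on Gaussian excursions, and I would simply cite \cite{worsley1994local} for that step.
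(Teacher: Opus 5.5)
Since the paper gives no proof and simply quotes Worsley's Corollary~3.4, your overall plan matches that source: the EC heuristic, a Kac--Rice computation of the top-dimensional EC density, extraction of the leading power of $\tau$, and a citation for the excursion-probability step. However, the computation you propose to carry out yourself contains an error that gives the wrong power of $\tau$.

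The problem is that, unlike the Gaussian case, $L$ and $\nabla L$ are \emph{not} independent. Each $\nabla u_t$ is independent of $(u_1,\ldots,u_{2T})$ and distributed as $\mathcal N(\mathbf 0,\bm\Lambda)$. So, conditionally on the $u_t$'s, $\nabla L=2\sum_t u_t\nabla u_t\sim\mathcal N(\mathbf 0,4L\bm\Lambda)$. Hence the Kac--Rice density of $\nabla L$ at $\mathbf 0$, given $L=\ell$, is $(2\pi)^{-Q/2}\det(4\ell\bm\Lambda)^{-1/2}$. Equivalently, for $2T>Q$, conditioning on $\nabla L=\mathbf 0$ changes the law of $L$ from $\chi^2_{2T}$ to $\chi^2_{2T-Q}$.

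Your claim that the conditional expectation is ``a polynomial in $\tau$ times the $\chi^2_{2T}$ density'', with leading degree $Q$ from the $u_t\nabla^2u_t$ terms, ignores this factor. Taken literally it yields $\tau^{T-1+Q}\mathrm e^{-\tau/2}$ rather than $\tau^{T-Q/2}\tau^{Q-1}=\tau^{T-1+Q/2}$. For odd $Q$, including the case $Q=1$ used in the paper, the correct leading factor $\tau^{Q/2}$ is not even polynomial in $\tau$. So your final display of $\rho_Q$ does not follow from the steps as you describe them.

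The fix is to keep both factors. Because $\nabla u_t$ is uncorrelated with both $u_t$ and $\nabla^2u_t$, one has $-\nabla^2L=2L\bm\Lambda+O(\sqrt L)$ on $\{\nabla L=\mathbf 0\}$. The top-dimensional term is therefore asymptotically
\[
\mu(\mathcal S)\int_\tau^\infty (2\ell)^Q\det(\bm\Lambda)\,\frac{\det(4\ell\bm\Lambda)^{-1/2}}{(2\pi)^{Q/2}}\,f(\ell)\,{\rm d}\ell
=\frac{\mu(\mathcal S)\det(\bm\Lambda)^{1/2}}{(2\pi)^{Q/2}}\int_\tau^\infty \ell^{Q/2}f(\ell)\,{\rm d}\ell,
\]
where $f(\ell)=\ell^{T-1}\mathrm e^{-\ell/2}/(2^T\Gamma(T))$ is the $\chi^2_{2T}$ density. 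Using $\int_\tau^\infty\ell^{Q/2}f(\ell)\,{\rm d}\ell\sim 2\tau^{Q/2}f(\tau)$, the factor $2$ turns $2^T$ into the $2^{T-1}$ of the lemma, and the powers combine to $\tau^{T-1+Q/2}$. This is exactly the stated formula.

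A minor point: the ``exponentially smaller error'' in your Step~1 is stronger than what the cited 1994 result provides; it comes from later work on the validity of the EC heuristic. Only first-order asymptotic equivalence is needed for the $\sim$ in the lemma.
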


In our case, $Q=1$, $\bm{\kappa} = \omega$ belongs to the compact set $\mathcal{S}=[0,2\pi]$, and the chi-squared random field is
\begin{align} \label{eq_ls_normform}
	L_{\backslash{\mathcal{P}}}(\omega)
	= \sum_{t=1}^{T} |n_t(\omega)|^2=\sum_{t=1}^{T}\Re^2\{n_t(\omega)\}+\sum_{t=1}^{T}\Im^2\{n_t(\omega)\},
\end{align}
where $n_t(\omega)$ is a circularly symmetric complex Gaussian random variable satisfying $n_t(\omega)\sim \mathcal{CN}(0,2)$ under $\mathcal{H}_0$.
Consequently, the real and imaginary parts of $n_t(\omega)$ are independent and identically distributed zero-mean real-valued Gaussian random variables with unit variance.
Since differentiation with respect to the real parameter $\omega$ is a linear operator, the first-order partial derivative $\partial n_t(\omega) / \partial \omega$ remains a circularly symmetric complex Gaussian random variable.
As a result, the real and imaginary parts of $\partial n_t(\omega) / \partial \omega$ are also independent zero-mean real-valued Gaussian random variables with equal variance, each being half of the total variance.
Therefore, the parameter $\Lambda(\omega) = \det(\mathbf{\Lambda})$ in Lemma~\ref{lemma_asymptotic} is given by
\begin{align} \label{eq_lambda_omega}
	\Lambda(\omega)
	&=\mathrm{var}\left(\frac{\partial \Re\{n_{t}(\omega)\}}{\partial \omega} ;\mathcal{H}_0 \right) 
	= \mathrm{var}\left( \frac{\partial \Im\{n_{t}(\omega)\}}{\partial \omega} ;\mathcal{H}_0 \right) \notag \\
	&=\frac{1}{2} \mathrm{var} \left(\frac{\partial n_{t}(\omega)}{\partial \omega} ;\mathcal{H}_0 \right).
\end{align}

Based on the above configuration, a generic AUB can be obtained by directly applying Lemma~\ref{lemma_asymptotic}. However, for the problem considered here, a tighter bound can be achieved by minimizing the $\Lambda(\omega)$ via the specific structure of the underlying random field. To this end, we introduce a frequency-dependent phase factor $\mathrm{e}^{\mathrm{j}\omega r}$, with $r \in \mathbb{R}$, into the complex Gaussian random field and define
\begin{align}
	\tilde{n}_t(\omega)
	= \mathrm{e}^{\mathrm{j}\omega r} n_t(\omega)
	= \sqrt{\frac{2}{N\sigma^2}} \tilde{\mathbf{a}}^\mathrm{H}(\omega)\mathbf{y}_{\backslash{\mathcal P}}(t),
\end{align}
where $\tilde{\mathbf{a}}(\omega) = \mathrm{e}^{-\mathrm{j}\omega r}\mathbf{a}(\omega)$. 
With this reparameterization, the chi-squared random field $L_{\backslash{\mathcal{P}}}(\omega)$ admits the equivalent representation
\begin{align} \label{eq_ls_hat}
	L_{\backslash{\mathcal{P}}}(\omega)
	= \sum_{t=1}^{T} \left|\tilde{n}_t(\omega)\right|^2.
\end{align}
Here, the free parameter $r$ is introduced so that $\Lambda(\omega)$ in \eqref{eq_lambda_omega} can be minimized with respect to $r$ via the reparameterized $\tilde{n}_t(\omega)$, thereby yielding a tighter bound.

From another perspective, this reparameterization is equivalent to transforming the original multisnapshot LSE\&D model in \eqref{lsemodel} as
\begin{align} \label{lsemodel_transformed}
	\mathbf{y}(t)
	&= \sum_{k=1}^{K} \mathbf{a}(\omega_k) x_k(t) + \mathbf{w}(t)  \notag \\
    &= \sum_{k=1}^{K} \bigl(\mathrm{e}^{-\mathrm{j}\omega r}\mathbf{a}(\omega_k)\bigr)
       \bigl(\mathrm{e}^{\mathrm{j}\omega r} x_k(t)\bigr) + \mathbf{w}(t)  \notag \\
    &= \sum_{k=1}^{K} \tilde{\mathbf{a}}(\omega_k)\tilde{x}_k(t) + \mathbf{w}(t),\quad t=1,2,\ldots,T,
\end{align}
where $\tilde{x}_k(t) = \mathrm{e}^{\mathrm{j}\omega r} x_k(t)$. 
Based on this transformed model, the GLRT and the associated AUB of the false alarm probability can be derived using the modified array manifold $\tilde{\mathbf{a}}(\omega)$. In essence, this procedure is equivalent to optimizing the reference point of the array via the parameter $r$, and the resulting manifold $\tilde{\mathbf{a}}(\omega)$ corresponds to an optimal choice that leads to a tighter bound.

According to (\ref{eq_lambda_omega}), the first-order partial derivative of $\tilde{n}_{t}(\omega)$ with respect to $\omega$ is affected by this reparameterization and is given by
\begin{align} \label{partial_hatnt}
	\frac{\partial \tilde{n}_{t}(\omega)}{\partial \omega}
	= -\mathrm{j} \sqrt{\frac{2}{N\sigma^2}} 
	\tilde{\mathbf{a}}^\mathrm{H}(\omega)
	(\mathbf{D}_N - r \mathbf{I}_N)
	\mathbf{y}_{\backslash{\mathcal P}}(t),
\end{align}
where $\mathbf{D}_N = \mathrm{diag}(0,1,\cdots,N-1)$.
Substituting (\ref{partial_hatnt}) in \eqref{eq_lambda_omega}, $\Lambda(\omega)$ can be expressed as
\begin{align} \label{quadraticfun}
	\Lambda(\omega)
	&= \frac{1}{N} \tilde{\mathbf{a}}^{\mathrm{H}}(\omega)
	(\mathbf{D}_N - r \mathbf{I}_N)^2
	\tilde{\mathbf{a}}(\omega) \notag \\
	&= \frac{1}{N} \sum_{i=0}^{N-1} (i - r)^2 \notag \\
	&= r^2 - (N-1)r + \frac{(N-1)(2N-1)}{6},
\end{align}
which is independent of $\omega$. When $r=0$, it follows that 
\begin{align} \label{Lambdageneric}
    \Lambda_{\text{generic}}(\omega) = \frac{(N-1)(2N-1)}{6},
\end{align}
which leads to a generic AUB.
The quadratic function (\ref{quadraticfun}) attains its minimum at $r_o = (N-1)/2$, yielding
\begin{align}\label{Lambdamin}
	\Lambda_{\rm min}(\omega) = \frac{N^2 - 1}{12},
\end{align}
which leads to a tighter AUB. The resulting manifold $\tilde{\mathbf{a}}(\omega) = \mathrm{e}^{-\mathrm{j}\omega r_o}\mathbf{a}(\omega)$ corresponds to centering the array manifold $\mathbf{a}(\omega)$. The following theorem summarizes the resulting tighter AUB of the false alarm probability for the GLRT \eqref{GLRT}.

\begin{theorem}\label{boundTheoremtight}
	The tighter AUB of the false alarm probability for the GLRT \eqref{GLRT} under null hypothesis $\mathcal{H}_0$ is
	\begin{align} \label{tighter_pfa}
		&\Pr
		\left(
		\max_{\omega}~ L_{\backslash{\mathcal{P}}}(\omega) > \tau ; \mathcal{H}_0
		\right)
		\sim P_\mathrm{FA} \notag \\
		&= \sqrt{\frac{N^2-1}{12}}
		\frac{
			\sqrt{2\pi}
			\tau^{T-\frac{1}{2}}
			\mathrm{e}^{-\frac{\tau}{2}}
		}{
			2^{T-1} \Gamma(T)
		},
		\quad \text{as}~ \tau \to \infty.
	\end{align}
\end{theorem}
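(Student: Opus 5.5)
The plan is to apply Lemma~\ref{lemma_asymptotic} with $Q=1$, $\mathcal{S}=[0,2\pi]$, to the reparameterized field \eqref{eq_ls_hat} with $r=r_o=(N-1)/2$.

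The first step is to set $u_{2t-1}(\omega)=\Re\{\tilde n_t(\omega)\}$ and $u_{2t}(\omega)=\Im\{\tilde n_t(\omega)\}$ for $t=1,\ldots,T$, and check the hypotheses of the lemma.
\begin{itemize}
\item \emph{Independence across snapshots.} This follows from the independence of $\mathbf{w}(t)$ across $t$.
\item \emph{Zero mean and unit variance.} Under $\mathcal{H}_0$ we have $\tilde n_t(\omega)\sim\mathcal{CN}(0,2)$, because $|\mathrm{e}^{\mathrm{j}\omega r}|=1$ and $\tilde{\mathbf a}^{\mathrm H}\tilde{\mathbf a}=N$.
\item \emph{Real and imaginary parts independent with equal variance at each $\omega$.} This follows from circular symmetry of $\tilde n_t(\omega)$.
\item \emph{Homogeneity.} The covariance $\mathbb{E}[\tilde n_t(\omega_1)\tilde n_t^*(\omega_2)]=\frac{2}{N}\tilde{\mathbf a}^{\mathrm H}(\omega_1)\tilde{\mathbf a}(\omega_2)$ depends only on $\omega_2-\omega_1$. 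Because the reparameterization centers the array, this kernel is real, being a Dirichlet kernel. The pseudo-covariance $\mathbb{E}[\tilde n_t(\omega_1)\tilde n_t(\omega_2)]$ is zero. Together these show that the real and imaginary parts are mutually independent, homogeneous real Gaussian fields with identical covariance functions.
\item \emph{Identical distribution of the components.} This holds since every $u_i$ has the same covariance function.
\end{itemize}

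The second step computes $\det(\mathbf\Lambda)$. By \eqref{eq_lambda_omega} and \eqref{quadraticfun} at $r=r_o$, this is the scalar $\Lambda_{\min}=(N^2-1)/12$ given in \eqref{Lambdamin}, and it does not depend on $\omega$. I also need to confirm that the derivative variance of each real component equals half the complex derivative variance; this again follows from circular symmetry of $\partial\tilde n_t/\partial\omega$.

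The third step is substitution into \eqref{asymptotic_expression_theory} with $Q=1$ and $\mu(\mathcal S)=2\pi$, which gives
\begin{align*}
\frac{2\pi\sqrt{\Lambda_{\min}}\,\tau^{T-\frac12}\mathrm{e}^{-\tau/2}}{(2\pi)^{1/2}2^{T-1}\Gamma(T)}\tau^{0}=\sqrt{\tfrac{N^2-1}{12}}\,\frac{\sqrt{2\pi}\,\tau^{T-\frac12}\mathrm{e}^{-\tau/2}}{2^{T-1}\Gamma(T)},
\end{align*}
which is \eqref{tighter_pfa}. Since $L_{\backslash\mathcal P}$ is the same field under either parameterization by \eqref{eq_ls_hat}, the maximum and hence the probability are unchanged.

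The main obstacle is the domain. The fields are $2\pi$-periodic, so the domain is really a circle and not a compact interval with boundary. Worsley's corollary includes boundary (Euler characteristic) terms, and these contribute lower-order terms in $\tau$, such as the $\chi^2_{2T}$ tail $\sim\tau^{T-1}\mathrm{e}^{-\tau/2}$. Those terms are dominated by the leading $\tau^{T-1/2}$ term as $\tau\to\infty$. I would argue that on the circle they vanish entirely, and that in any case they do not affect the asymptotic equivalence. A secondary point is that $\tilde n_t$ is $2\pi$-periodic only up to the sign $\mathrm{e}^{\mathrm{j}2\pi r_o}$ when $N$ is even. This sign leaves $|\tilde n_t|^2$ unchanged, so $L$ and the result are unaffected.
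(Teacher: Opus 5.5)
Your proposal is correct and is essentially the paper's proof, which consists only of substituting $Q=1$, $\mathcal{S}=[0,2\pi]$ and $\Lambda_{\min}=(N^2-1)/12$ into Lemma~\ref{lemma_asymptotic}. You also check that the centered covariance kernel is real, so the real and imaginary parts are independent, identically distributed homogeneous fields (this would fail for $r\neq r_o$), and note that the boundary contribution is only of order $\tau^{T-1}\mathrm{e}^{-\tau/2}$; both fill in justifications the paper leaves implicit.
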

\begin{proof}
Substituting $Q=1$, ${\mathcal S}=[0,2\pi]$, $\Lambda_{\rm min}(\omega)$ (\ref{Lambdamin}) into (\ref{asymptotic_expression_theory}), (\ref{tighter_pfa}) can be obtained.
\end{proof}
Theorem~\ref{boundTheoremtight} establishes an asymptotic relationship between the false alarm probability and the detection threshold.
To achieve the CFAR detection, it is necessary to solve the detection threshold from the transcendental equation~\eqref{tighter_pfa}. To this end, we transform \eqref{tighter_pfa} into a mathematically tractable form. Starting from \eqref{tighter_pfa}, we first isolate the $\tau$-dependent terms as
\begin{align}
\tau^{T-\frac{1}{2}} \mathrm{e}^{-\frac{\tau}{2}}
= 2^{T-\frac{1}{2}}  \xi P_\mathrm{FA},
\end{align}
where $\xi = (T-1)!\sqrt{\frac{3}{(N^2-1)\pi}}$ due to $ \Gamma(T)=(T-1)!$. Taking the natural logarithm yields
\begin{align}
\left(T-\tfrac{1}{2}\right)\ln \tau - \frac{\tau}{2} = \ln(2^{T-\frac{1}{2}} \xi P_{\mathrm{FA}}).
\end{align}
Dividing both sides by $T-\tfrac{1}{2}$ gives
\begin{align}
\ln \tau - \frac{\tau}{2T-1} = \frac{\ln (\xi P_{\mathrm{FA}})}{T-\tfrac{1}{2}} + \ln 2.
\end{align}
Exponentiating both sides, we obtain
\begin{align}
\tau \, \mathrm{e}^{\frac{\tau}{1-2T}} 
= 2 \mathrm{e}^{\ln(\xi P_{\mathrm{FA}})/(T-\frac{1}{2})}.
\end{align}
Finally, by introducing the change of variable $x = \frac{\tau}{1-2T}$, the above equation can be rewritten into the canonical form
\begin{align}
x \mathrm{e}^{x}
= \frac{-\mathrm{e}^{\ln(\xi P_{\mathrm{FA}})/(T-\frac{1}{2})}}{T-\frac{1}{2}},
\end{align}
which leads directly to the closed-form solution via the Lambert $W$ function.

The inverse of the function $f(x)=x\mathrm{e}^x$ is the Lambert $W$ function, which possesses two real branches, namely the principal branch $W_0(\cdot)$, which is monotonically increasing, and the $-1$ branch $W_{-1}(\cdot)$, which is monotonically decreasing \cite{chapeau2002numerical}, as shown in Fig.~\ref{fig_lambertwfun}. Accordingly, the solutions to the equation $x\mathrm{e}^x =  f(x)$ are given by $x = W_0(f(x))$ and $x = W_{-1}(f(x))$. Since the detection threshold $\tau$ is monotonically decreasing with respect to the false alarm probability $P_\mathrm{FA}$, the correct solution must lie on the branch for which $x = \frac{\tau}{1-2T}$ is monotonically decreasing with respect to $f(x) = \frac{-\mathrm{e}^{\ln(\xi P_{\mathrm{FA}}) / \left(T-\frac{1}{2}\right)}}{T-1/2}$. This condition is satisfied by the $-1$ branch $W_{-1}(\cdot)$. Therefore, the solution for $\tau$ is
\begin{align} \label{tau_EMNOMP}
	\tau
	= (1-2T) W_{-1} \left(
	\frac{-\mathrm{e}^{\ln(\xi P_{\mathrm{FA}}) / \left(T-\frac{1}{2}\right)}}{T-\frac{1}{2}}
	\right).
\end{align}
According to (\ref{tighter_pfa}) and (\ref{pd_glrt}), the false alarm probability and the detection probability using $T_{\rm GLRT}$ (\ref{GLRT}) are
\begin{subequations}
\begin{align} 
	\label{pfa_emnomp}
	&P_{\mathrm{FA}}^{\rm {E}}
	= \sqrt{\frac{N^2-1}{12}}
		\frac{
			\sqrt{2\pi}
			(\tau^{\rm E})^{T-\frac{1}{2}}
			\mathrm{e}^{-\frac{\tau^{\rm {E}}}{2}}
		}{
			2^{T-1} (T-1)!
		}, \\
	\label{pd_emnomp}
	&P_{\mathrm{D}}^{\rm {E}}
	= Q_T \left(\sqrt{\psi},\sqrt{\tau^{\rm {E}}} \right) \notag \\
	&= Q_T \left(
		\sqrt{2 \, {\rm SNR}_{\rm int}},
		\sqrt{ (1-2T) W_{-1} \left(\frac{-\mathrm{e}^{\ln(\xi P_{\mathrm{FA}}) / \left(T-\frac{1}{2}\right)}}{T-\frac{1}{2}} \right)} 
		\right),
\end{align}
\end{subequations}
where the superscript $\rm E$ denotes the Enhanced MNOMP.

\begin{figure}[htbp]
	\centering
	\includegraphics[width = 80mm]{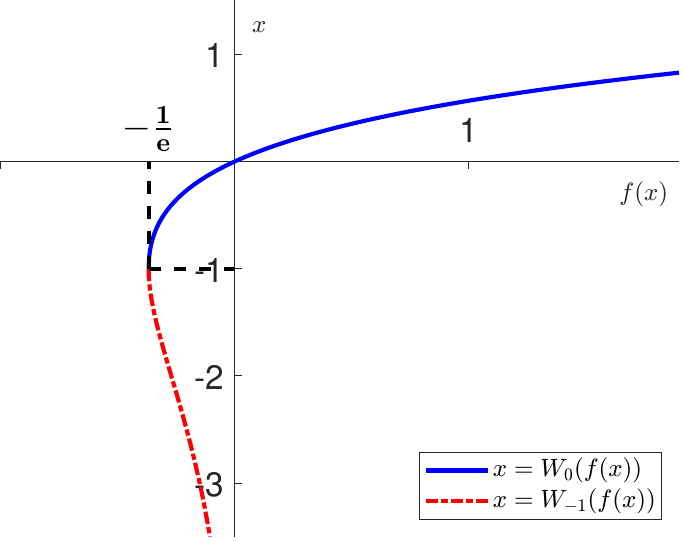}
	\caption{Real values of the Lambert W function. The solid curve is the principal branch $W_0(\cdot)$, and the dashed curve is the $-1$ branch $W_{-1}(\cdot)$}.
	\label{fig_lambertwfun}
\end{figure}

\subsection{EMNOMP Algorithm}

The derived detection threshold in (\ref{tau_EMNOMP}) enables CFAR detection in the continuous domain, thereby improving MNOMP and leading to the proposed EMNOMP. The EMNOMP procedure is summarized in Algorithm~\ref{algorithm_EMNOMP}. In practice, the maximization over the continuous domain, i.e., $T_{\rm GLRT}=\underset{\omega}{\operatorname{max}}~L_{\backslash{\mathcal{P}}}(\omega)$, cannot be performed exactly. Therefore, a coarse frequency estimate is first obtained over the oversampled grid, i.e., $\hat{\omega} = \underset{\omega\in \Omega }{\operatorname{argmax}}~L_{\backslash{\mathcal{P}}}(\omega)$ with $\Omega$ defined in (\ref{osGrids}), followed by Newton refinement to mitigate the off-grid effect. Finally, the detection statistic evaluated at the refined frequency $\hat{\omega}^\prime$ is used for line spectral detection, resulting in a practical implementation that approximates CFAR detection in the continuous domain.

\begin{algorithm}[htbp]
        \caption{EMNOMP Algorithm}
        \label{algorithm_EMNOMP}
        \begin{algorithmic}[1]
        \State \textbf{Input:} $\{\mathbf{y}(t)\}_{t=1}^T, \sigma^2, P_\mathrm{FA}, R_s, R_c, \gamma_{\rm os}$
        % \State \textbf{Output:} $\mathcal{P}_m$
        \State \textbf{Initialize:} $\tau^{\rm E}, m \gets 0, \mathcal{P}_{0} \gets \{\}$, 
        \State \hspace*{4em} $\{\mathbf{y}_{\backslash{\mathcal{P}_0}}(t)\}_{t=1}^T \gets \{\mathbf{y}(t)\}_{t=1}^T$,
        \State \hspace*{4em} $\Omega = \left\{ \frac{2\pi n}{\gamma_{\rm os} N} \,\middle|\, n=0,\ldots,\gamma_{\rm os}N-1 \right\}$
		\While{True}
            \State $m \gets m + 1$
			% \State \textbf{Coarse Detection Stage:}
            \State $ \hat{\omega} = \underset{\omega \in \Omega}{\operatorname{argmax}} ~ L_{\backslash{\mathcal{P}_{m-1}}}(\omega) $
			\State $\hat{x}(t) \gets \mathbf{a}^{\rm H}(\hat{\omega}) \mathbf{y}_{\backslash{\mathcal P}_{m-1}}(t)/N$, for $t=1,\ldots,T$
			\State $\mathcal{P}_{m}^\prime \gets \mathcal{P}_{m-1} \cup \{(\hat{\omega},\{\hat{x}(t)\}_{t=1}^T)\}$
            \State \textbf{Single Refinement:} Same as MNOMP. Obtain the refined $\hat{\omega}^\prime$.
            \If{$L_{\backslash{\mathcal{P}_{m-1}}}(\hat{\omega}^\prime) \le \tau^{\rm E}$}
                % \State \textbf{break}
                \State \textbf{return} $\mathcal{P}_{m-1}$
            \EndIf
            \State  $\mathcal{P}_{m}^{\prime\prime} \gets \mathcal{P}_{m-1} \cup \{(\hat{\omega}^\prime,\{\hat{x}^\prime(t)\}_{t=1}^T)\}$
			\State \textbf{Cyclic Refinement:} Same as MNOMP.
            \State \textbf{Update:} Same as MNOMP.
            \State Let $\mathcal{P}_{m}$ denote the new set of parameters.
		\EndWhile
		% \State \textbf{Return:} $\mathcal{P}_m$
        \end{algorithmic}
\end{algorithm}

\subsection{Comparison between EMNOMP and MNOMP}
We now analyze the SNR gain of EMNOMP, compared to MNOMP. Eq. (\ref{pd_mnomp}) and (\ref{pd_emnomp}) establish the relationship between the detection probability, the detection threshold, and the required SNR of MNOMP and EMOMP, respectively. 
For a specified false alarm probability $P_{\rm FA}^{\rm spec}$ and detection probability $P_{\rm D}^{\rm spec}$ for both MNOMP and EMNOMP, e.g., $P_{\rm FA}^{\rm spec}=10^{-2}$ and $P_{\rm D}^{\rm spec}=0.5$, the required integrated SNRs ${\rm SNR}_{\rm int}^{\rm M}$ and ${\rm SNR}_{\rm int}^{\rm E}$ can be determined accordingly.
We define the inverse function $Q_T^{-1}(P_{\rm D},\sqrt{\tau})$ such that $Q_T(Q_T^{-1}(P_{\rm D},\sqrt{\tau}),\sqrt{\tau})=P_{\rm D}$. As a result, the required integrated SNRs for MNOMP and EMNOMP are given by
\begin{subequations}
	\begin{align}
		{\rm SNR}_{\rm int}^{\rm M}
		&= \frac{1}{2 \beta_{g^*}} \left(Q_T^{-1}(P_{\rm D}^{\rm spec},\sqrt{\tau^{\rm M}})\right)^2, \\
		{\rm SNR}_{\rm int}^{\rm E} 
		&= \frac{1}{2} \left(Q_T^{-1}(P_{\rm D}^{\rm spec},\sqrt{\tau^{\rm E}})\right)^2.
	\end{align}
\end{subequations}
Accordingly, the SNR gain of EMNOMP relative to MNOMP can be expressed as 
\begin{align}\label{SNRgain}
	{\rm SNR}_{\rm gain}
	&=10\log\frac{{\rm SNR}_{\rm int}^{\rm M}}{{\rm SNR}_{\rm int}^{\rm E}} \notag \\
	&= \underbrace{20\log\frac{Q_T^{-1} \left(P_{\rm D}^{\rm spec},\sqrt{\tau^{\rm M}}\right)}{Q_T^{-1} \left(P_{\rm D}^{\rm spec},\sqrt{\tau^{\rm E}}\right)}}_{\text{Threshold-bias Term}}
        + \underbrace{10\log\frac{1}{\beta_{g^*}}}_{\text{Frequency-mismatch Term}},
\end{align}
where $\tau^{\rm M}$ and $\tau^{\rm E}$ are the detection threshold of MNOMP and EMNOMP for the same false alarm probability $P_{\rm FA}^{\rm spec}$, which can be calculated by (\ref{tau_MNOMP}) and (\ref{tau_EMNOMP}). 
It is also worth noting that the SNR gain in (\ref{SNRgain}) consists of two terms with clear physical interpretations. The first term (threshold-bias term) arises from the detection threshold bias induced by the different CFAR processing strategies. The second term (frequency-mismatch term) is due to the frequency mismatch inherent in MNOMP.
Specifically, MNOMP employs the approximate GLRT detector in (\ref{T_MNOMP}), which performs detection over the discrete DFT grid $\Omega_{\rm DFT}$, whereas the true frequency lies in the continuous parameter space $[0,2\pi]$. This mismatch introduces a loss factor $\beta_{g^*}$, as reflected in (\ref{noncent_paramter_grid}). In contrast, EMNOMP uses the exact GLRT detector in (\ref{GLRT}) and performs detection directly over the continuous frequency domain, thereby eliminating the frequency mismatch and achieving an additional SNR gain over MNOMP.

According to the definition of the test statistic of MNOMP and EMNOMP, to obtain the same $P_{\rm FA}^{\rm spec}$, one must have $\tau^{\rm E}>\tau^{\rm M}$. We could prove it by contradiction. Suppose that $\tau^{\rm E}<\tau^{\rm M}$. Then the probability of $T_{\rm EMNOMP}=\underset{\omega\in[0,2\pi)}{\operatorname{max}}
L_{\backslash{\mathcal{P}}}(\omega) > \tau^{\rm E}$ must be greater than that of $T_{\rm MNOMP}=\underset{\omega_g \in \Omega_{\rm DFT}}{\operatorname{max}}
L_{\backslash{\mathcal{P}}}(\omega_g) > \tau^{\rm M}$, which contradicts that the two events have the same $P_{\rm FA}^{\rm spec}$. Using the above fact, we obtain a bound for the ${\rm SNR}_{\rm gain}$.
\begin{proposition} \label{bound_snrgain}
	The SNR gain ${\rm SNR}_{\rm gain}$ is upper bounded by 
	\begin{align}\label{SNRgainub}
		{\rm SNR}_{\rm gain} < 10\log\frac{1}{\beta_{g^*}},
	\end{align}
where 
\begin{align} 
	\beta_{g^*} 
    = \bigg|
		\sin \left(\frac{N \Delta \omega}{2}\right) /
	\left(N \sin \left( \frac{\Delta \omega}{2} \right) \right)
	\bigg|^2
\end{align}
with $\frac{1}{\beta_{g^*}} \in \left[1,N^2 \sin^2(\frac{\pi}{2N})\right]$, and $\Delta \omega = |\omega_0 - \omega_{g^*}| \in [0,\frac{\pi}{N}]$.
\end{proposition}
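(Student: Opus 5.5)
The plan is to split the bound along the decomposition \eqref{SNRgain}. The gain is the threshold-bias term plus the frequency-mismatch term, so it suffices to show that the threshold-bias term is strictly negative. After that, I will characterize the range of $\beta_{g^*}$.

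\emph{Step 1 (monotonicity of the Marcum inverse).} I will show that $Q_T^{-1}(P_{\rm D}^{\rm spec},\sqrt{\tau})$ is strictly increasing in $\tau$ for fixed $P_{\rm D}^{\rm spec}\in(0,1)$. The function $Q_T(a,b)$ is strictly increasing in $a$ and strictly decreasing in $b$. This is standard: a noncentral chi-squared variable is stochastically increasing in its noncentrality, and a right-tail probability decreases as the threshold grows. Now suppose $b$ is raised while $Q_T(a,b)=P_{\rm D}^{\rm spec}$ is held fixed. Then $a$ must increase, by the implicit function theorem or a direct comparison argument.

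\emph{Step 2 (ordering of thresholds).} Next I will use the fact established just before the proposition, that $\tau^{\rm E}>\tau^{\rm M}$ at the same $P_{\rm FA}^{\rm spec}$. The reason is that the continuous maximum dominates the grid maximum pointwise, so equal false-alarm rates force a strictly larger threshold. Combining this with Step 1 gives $Q_T^{-1}(P_{\rm D}^{\rm spec},\sqrt{\tau^{\rm M}})<Q_T^{-1}(P_{\rm D}^{\rm spec},\sqrt{\tau^{\rm E}})$. The threshold-bias term $20\log(\cdot)$ is therefore negative, and \eqref{SNRgainub} follows.

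\emph{Step 3 (range of $1/\beta_{g^*}$).} Since $\omega_{g^*}$ is the nearest DFT grid point and the grid spacing is $2\pi/N$, we have $\Delta\omega\in[0,\pi/N]$. Write $\beta(\Delta\omega)=|\sin(N\Delta\omega/2)/(N\sin(\Delta\omega/2))|^2$.
\begin{itemize}
\item At $\Delta\omega\to0$, $\beta\to1$.
\item On $[0,\pi/N]$ the function $\beta$ is decreasing, since this interval lies inside the main lobe, whose first null is at $2\pi/N$.
\end{itemize}
To verify the monotonicity, I will write $g(x)=\sin(Nx)/(N\sin x)$ on $x=\Delta\omega/2\in[0,\pi/(2N)]$. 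Then $g'(x)\propto N\cos(Nx)\sin x-\sin(Nx)\cos x$, which equals $\sin x\cos x\,(N\tan x-\tan(Nx))/(\cos x\cos Nx)\cdot\cos x\cos Nx$ and so has the sign of $N\tan x-\tan(Nx)\le 0$. This uses convexity of $\tan$ on $[0,\pi/2)$, or equivalently that $\tan(Nx)/\tan x$ is at least $N$ there. The minimum value of $\beta$ is attained at $\Delta\omega=\pi/N$, where $\beta=1/(N^2\sin^2(\pi/(2N)))$. Hence $1/\beta_{g^*}\in[1,N^2\sin^2(\pi/(2N))]$.

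I expect the main obstacle to be Step 1, which needs a rigorous strict monotonicity of $Q_T$ in both arguments. I would cite the known derivative identities $\partial Q_T/\partial b=-b(b/a)^{T-1}e^{-(a^2+b^2)/2}I_{T-1}(ab)<0$ and $\partial Q_T/\partial a>0$. A second subtlety is that $\tau^{\rm E}$ is defined through an asymptotic expression rather than an exact one. I would either treat the exact-threshold ordering as given, as the paper does, or remark that the inequality is meant for that exact threshold.
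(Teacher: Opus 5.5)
Your proposal is correct and follows essentially the same route as the paper. Both split \eqref{SNRgain} into the threshold-bias and frequency-mismatch terms, combine strict monotonicity of $Q_T$ in both arguments with $\tau^{\rm E}>\tau^{\rm M}$ to make the threshold-bias term negative, and bound $\beta_{g^*}$ using the monotone decrease of the Dirichlet kernel on $\Delta\omega\in[0,\pi/N]$. One small algebra slip: the correct factorization is $N\cos(Nx)\sin x-\sin(Nx)\cos x=\cos x\cos(Nx)\,(N\tan x-\tan(Nx))$, not your expression with the extra $\sin x\cos x$, but the sign conclusion is unaffected because the prefactor is positive on $(0,\pi/(2N))$.
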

\begin{proof}
	The SNR gain ${\rm SNR}_{\rm gain}$ is (\ref{SNRgain}). Note that $Q_{T}(a,b)$ is strictly increasing in $a$ for all $a \geq 0$ and $b,T >0$, and is strictly decreasing in $b$ for all $a,b\geq 0$ and $T>0$ \cite{MarcumQfun10TIT}.
	Therefore ${Q_T^{-1} \left(P_{\rm D}^{\rm spec},\sqrt{\tau^{\rm M}}\right)}<{Q_T^{-1} \left(P_{\rm D}^{\rm spec},\sqrt{\tau^{\rm E}}\right)}$, which can be proved by contradiction, and (\ref{SNRgainub}) can be obtained. Due to $\Delta \omega = |\omega_0 - \omega_{g^*}| \in \left[0, \frac{\pi}{N}\right]$ and the monotonic decreasing property of $|\sin(x)/(N\sin(x/N))|^2$ for $x\in (0,\pi/2]$, one has $\beta_{g^*}\in \left[\left({1}/{\left(N\sin\left(\frac{\pi}{2N}\right)\right)}\right)^2,1\right]$.
\end{proof}

Proposition \ref{bound_snrgain} establishes a fundamental upper bound on the achievable SNR gain between EMNOMP and MNOMP under the same specified false alarm probability for any finite $N$. Notably, this bound depends solely on the frequency mismatch coefficient $\beta_{g^*}$. It is expected that as $N \to \infty$, i.e., as the DFT grid becomes increasingly dense and the approximate detector of MNOMP approaches the exact GLRT of EMNOMP, the detection thresholds of MNOMP and EMNOMP become asymptotically identical, causing the threshold-induced SNR difference to vanish. Consequently, the dominant performance gap is attributed to the frequency mismatch caused by the grid discretization. For the worst frequency mismatch case, i.e., $\Delta \omega = \frac{\pi}{N}$, as $N \to \infty$, the SNR gain of EMNOMP approaches its theoretical maximum
\begin{align}
    \lim_{N \to \infty} {\rm SNR}_{\rm gain}
    &= \lim_{N \to \infty}  10 \log \frac{1}{\beta_{g^*}} \notag \\
    &= \lim_{N \to \infty} 10 \log \left(N^2 \sin^2\left(\frac{\pi}{2N}\right)\right) \notag \\
    &= 10 \log \frac{\pi^2}{4} 
    \approx \, 3.92   \text{ dB}.
\end{align}
This quantifies the maximal improvement obtained by performing CFAR detection over the continuous frequency domain using exact detector (\ref{GLRT}), as opposed to CFAR detection restricted to a discretized DFT grid with the approximate detector (\ref{T_MNOMP}). Meanwhile, it should be noted that the theoretical maximum SNR gain of $3.92$ dB at $\Delta \omega = \frac{\pi}{N}$ is merely an asymptotic limit and is not practically attainable.

It is also worth noting that ${\rm SNR}_{\rm gain}$ can be less than $0$ dB in certain settings, implying that EMNOMP requires a higher SNR than MNOMP for the same detection probability. When the true frequency is very close to a DFT grid point, MNOMP incurs almost no mismatch loss, but EMNOMP still requires a higher threshold (because it guards against the full continuous interval), resulting in net negative gain. A particular case is that when the frequency lies on the DFT grid exactly, no loss incurs due to the detection, but the detection threshold calculated by the specified false alarm probability of EMNOMP is larger, compared to MNOMP, according to (\ref{SNRgain}), ${\rm SNR}_{\rm gain}<0$ dB.

When $Q_T (\sqrt{\psi},\sqrt{\tau}) = 0.5$ and $\sqrt{\psi \tau}$ is large, it follows that $\psi \approx \tau$ \cite{temme1993asymptotic}. Using this property and Eqs.~(\ref{pd_mnomp}) and (\ref{pd_emnomp}), the required integrated SNRs for MNOMP and EMNOMP can be approximated as
\begin{subequations}
\begin{align} \label{snrsapprox}
	{\rm SNR}_{\rm int}^{\rm M}
	&\approx \frac{\tau^{\rm M}}{2 \beta_{g^*}}, \\
	{\rm SNR}_{\rm int}^{\rm E} 
	&\approx \frac{\tau^{\rm E}}{2}.
\end{align}
\end{subequations}
The resulting SNR gain can be approximated as
\begin{align} \label{snrgainapprox}
	{\rm SNR}_{\rm gain}
	&= 10\log\frac{{\rm SNR}_{\rm int}^{\rm M}}{{\rm SNR}_{\rm int}^{\rm E}} \notag \\
	&\approx \underbrace{10\log \frac{\tau^{\rm M}}{\tau^{\rm E}}}_{\text{Threshold-bias Term (approx.)}} + \underbrace{10\log\frac{1}{\beta_{g^*}}}_{\text{Frequency-mismatch Term}},
\end{align}
which decomposes the SNR gain into a approximate threshold-bias term and a frequency-mismatch term. The approximate threshold-bias term arises from the gap in detection thresholds under the same false alarm probability, whereas the frequency-mismatch term can be explained as the gap between the detection statistics. In particular, the signal excess of EMNOMP can be calculated as the gap between the output of its detection statistic in the signal part and its threshold.
Substituting $\mathbf{y}_{\backslash{\mathcal P}}(t) = \mathbf{a}(\omega_0) x(t) $ into $T_{\rm GLRT}$ (\ref{GLRT}) and $T_{\rm MNOMP}$ (\ref{T_MNOMP}) yield the detection statistics in the signal part, which are given by
\begin{align}
	T_{\rm GLRT}^\prime
	&=\underset{\omega}{\operatorname{max}}
	\sum_{t=1}^{T} \bigg| \sqrt{\frac{2}{N\sigma^2}}
	\mathbf{a}^\mathrm{H}(\omega) \mathbf{a}(\omega_0) x(t) \bigg|^2 \notag \\
	&= \frac{2N}{\sigma^2}\sum_{t=1}^{T}|x(t)|^2, \\
	T_{\rm MNOMP}^\prime
	&= \underset{\omega_g \in \Omega_{\text{DFT}}}{\operatorname{max}}
	\sum_{t=1}^{T} \bigg| \sqrt{\frac{2}{N\sigma^2}}
	\mathbf{a}^\mathrm{H}(\omega_g) \mathbf{a}(\omega_0) x(t) \bigg|^2 \notag \\
	&= \beta_{g^*} \frac{2N}{\sigma^2}\sum_{t=1}^{T}|x(t)|^2.
\end{align}
The gap between $T_{\rm GLRT}^\prime$ and $T_{\rm MNOMP}^\prime$ is $10\log T_{\rm GLRT}^\prime-10\log T_{\rm MNOMP}^\prime = 10\log\frac{1}{\beta_{g^*}}$, which is consistent with the frequency-mismath term.
Then, the signal excess of EMNOMP is 
\begin{align}
	{\rm SE}^{\rm E} = 10\log T_{\rm GLRT}^\prime - 10\log \tau^{\rm E}.
\end{align}
Similarly, the signal excess of MNOMP can be calculated as
\begin{align}
	{\rm SE}^{\rm M} = 10\log T_{\rm MNOMP}^\prime - 10\log \tau^{\rm M}.
\end{align}
The gap between the signal excess of EMNOMP and MNOMP is 
\begin{align}\label{SEanalysis}
    {\rm SE}^{\rm E} - {\rm SE}^{\rm M} = 10\log \frac{\tau^{\rm M}}{\tau^{\rm E}} + 10\log\frac{1}{\beta_{g^*}},
\end{align}
which is consistent with the SNR gain in (\ref{snrgainapprox}), thereby providing an intuitive explanation of the SNR gain.

In the following, we use the detection statistics, i.e., $T_{\rm GLRT}$ and $T_{\rm MNOMP}$, rather than their signal part, i.e., $T_{\rm GLRT}^\prime$ and $T_{\rm MNOMP}^\prime$, to approximately calculate the signal excesses in the ensuing numerical simulation. It is expected that when the SNR of the single snapshot is high, the detection statistics explain the SNR gain well using the SE. We conduct several numerical simulations to verify the above analysis.

Under both high-SNR and low-SNR conditions, we conduct numerical simulations for two scenarios: the target frequency located midway between DFT grid frequencies and the target frequency located close to a DFT grid frequency. The parameters are set as follows: $N=16$, $T=16$, $K=1$, and $P_{\rm FA}=10^{-3}$. According to (\ref{tau_MNOMP}) and (\ref{tau_EMNOMP}), the detection thresholds of MNOMP and EMNOMP are $10\log\tau^{\rm M} = 18.58$ dB and $10\log\tau^{\rm E} = 18.84$ dB, respectively. Accordingly, the threshold-bias term in (\ref{snrgainapprox}) is $10\log\frac{\tau^{\rm M}}{\tau^{\rm E}} = -0.26$ dB.

In the high-SNR case with ${\rm SNR}_{\rm int} = 30$ dB (the per-snapshot integrated SNR is $30-10\log T=30-10\log 16=18$ dB), as show in Fig.~\ref{fig_snrgain1}, the true frequency is $\omega_0 = \pi + \frac{\pi}{N}$, which lies midway between the grid frequencies $\pi$ and $\pi+\frac{2\pi}{N}$. Hence, $\Delta \omega = \frac{\pi}{N}$, and the frequency-mismatch term in (\ref{snrgainapprox}) is $10\log\frac{1}{\beta_{g^*}} = 3.91$ dB. The detection statistics are $T_{\text{GLRT}} = 32.81$ dB for EMNOMP and $T_{\text{MNOMP}} = 28.85$ dB for MNOMP. The resulting SNR gain, interpreted as the gap between the signal excesses of EMNOMP and MNOMP, is approximately $-0.26 + 32.81 - 28.85 = 3.70$ dB, which is close to the theoretical value $-0.26 + 3.91 = 3.65$ dB predicted by (\ref{snrgainapprox}). As shown in Fig.~\ref{fig_snrgain2}, the true frequency is $\omega_0 = \pi + \frac{0.1\pi}{N}$, which is close to the DFT grid frequency $\omega_{g^*}=\pi$, yielding $\Delta \omega = \frac{0.1\pi}{N}$. In this case, the frequency-mismatch term is $10\log\frac{1}{\beta_{g^*}} = 0.04$ dB. The detection statistics are $T_{\text{GLRT}} = 32.98$ dB and $T_{\text{MNOMP}} = 32.95$ dB. The resulting SNR gain is about $-0.26 + 32.98 - 32.95 = -0.23$ dB, which agrees well with the theoretical value $-0.26 + 0.04 = -0.22$ dB. This is because, when the  per-snapshot integrated SNR is high, the signal component dominates the detection statistics, and thus the gap between the detection statistics closely matches that between their signal parts.

In the low-SNR case with ${\rm SNR}_{\rm int} = 16$ dB (the per-snapshot integrated SNR is $18-10\log T=18-10\log 16=6$ dB), as shown in Fig.~\ref{fig_snrgain3}, the detection statistics are $T_{\text{GLRT}} = 21.85$ dB and $T_{\text{MNOMP}} = 19.70$ dB, yielding an SNR gain of approximately $-0.26 + 21.85 - 19.70 = 1.89$ dB, which deviates from the theoretical value $3.65$ dB calculated before. However, when the target frequency is close to a DFT grid frquency, as show in Fig.~\ref{fig_snrgain4}, the SNR gain is about  $-0.26 + 20.24 - 20.22 = -0.24$ dB, which is consistent with the theoretical value $-0.22$ dB calculated before. There results indicate that, when the per-snapshot integrated SNR is low, especially when the target frequency is far from the DFT grid, the above straightforward and intuitive interpretation is no longer accurate.

% \begin{figure*}[htbp]
\begin{figure*}[htbp]
	\centering
	\subfigure[${\rm SNR}_{\rm int}=30$ dB, $\omega_0 = \pi + \frac{\pi}{N}$]{
		\label{fig_snrgain1}
		\includegraphics[width=2.3in]{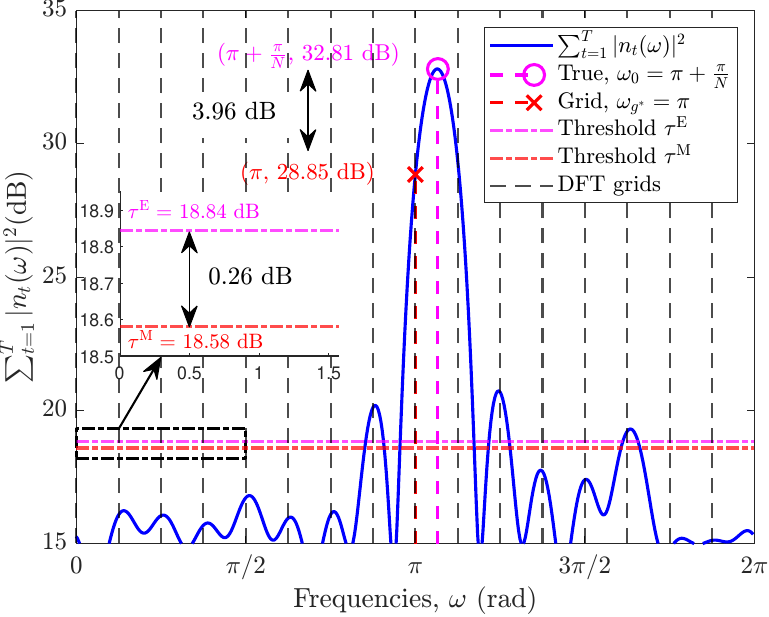}}
	\subfigure[${\rm SNR}_{\rm int}=30$ dB, $\omega_0 = \pi + \frac{0.1\pi}{N}$]{
		\label{fig_snrgain2}
		\includegraphics[width=2.3in]{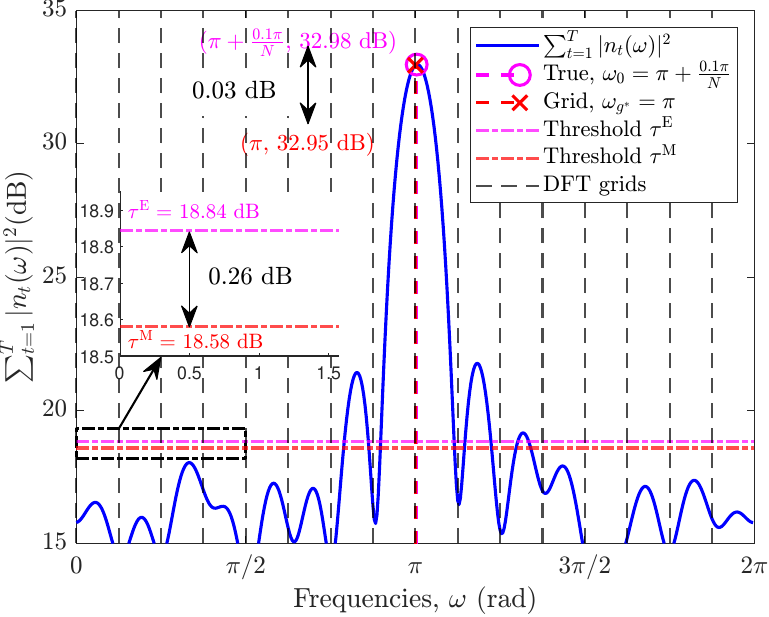}} \\
	\subfigure[${\rm SNR}_{\rm int}=16$ dB, $\omega_0 = \pi + \frac{\pi}{N}$]{
		\label{fig_snrgain3}
		\includegraphics[width=2.3in]{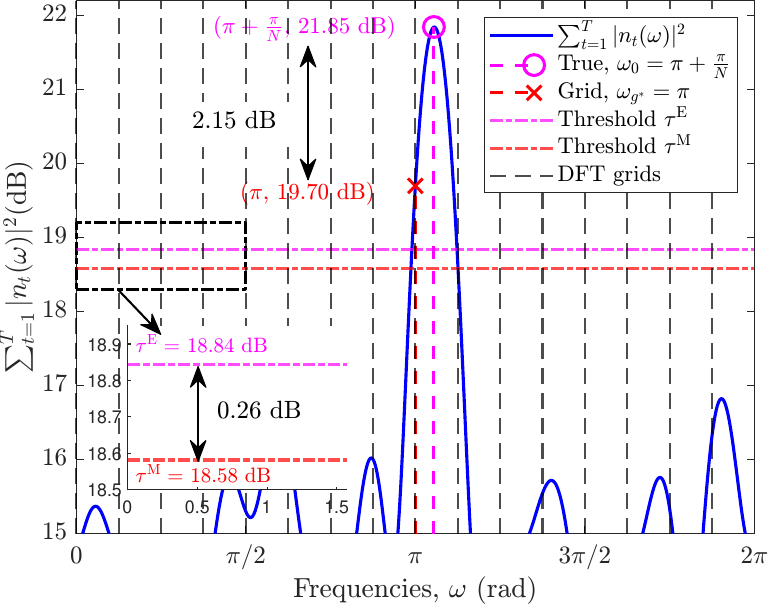}}
	\subfigure[${\rm SNR}_{\rm int}=16$ dB, $\omega_0 = \pi + \frac{0.1\pi}{N}$]{
		\label{fig_snrgain4}
		\includegraphics[width=2.3in]{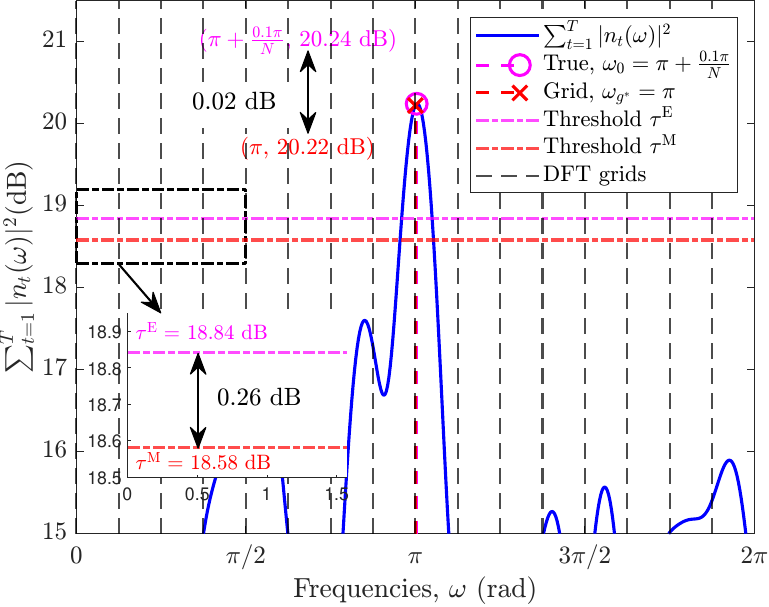}}
	\caption{Detection results of EMNOMP and MNOMP and their detection thresholds. Parameters: $N=16$, $T=16$, $K=1$, and $P_{\rm FA} = 10^{-3}$.}
	\label{fig_snrgain}
\end{figure*}

\section{Numerical Simulation}

Numerical simulations are conducted to validate the theoretical analysis and demonstrate the effectiveness of EMNOMP. First, we examine the validity of the tighter AUB for the false alarm probability. Then, we investigate the impact of the number of measurements, the number of snapshots, the false alarm probability, and the detection probability on the threshold-bias term in (\ref{SNRgain}). The convergence behaviors of the threshold-bias term and the frequency-mismatch term with respect to $N$ are further analyzed. Finally, we evaluate the performance of EMNOMP in terms of the empirical false alarm probability and detection probability.

\subsection{Analysis of AUB}

To validate the derived tighter AUB, the empirical false alarm probability, the AUB (i.e., generic AUB based on $\Lambda_{\text{generic}}(\omega)$ (\ref{Lambdageneric})), and the tighter AUB (as given in Theorem~\ref{boundTheoremtight} based on $\Lambda_{\text{min}}(\omega)$ (\ref{Lambdamin})) are examined as functions of the detection threshold, as shown in Fig.~\ref{fig_Tau_vs_PFA}. The exact GLRT detector $T_{\rm GLRT}$ in (\ref{GLRT}) is used, where the observations consist of noise only, i.e., $\mathbf{y}_{\backslash{\mathcal P}}(t)=\mathbf{w}(t)$ with $\mathbf{w}(t)$ being additive white Gaussian noise of variance $\sigma^2=1$. A false alarm event is declared when $T_{\rm GLRT} > \tau$. The empirical false alarm probability is estimated as the proportion of false alarm events over $10^7$ Monte Carlo (MC) trials for each threshold. Specifically, the empirical false alarm probability and the theoretical tighter AUB are plotted for fixed $T=8$ with $N=32,128,512$, and for fixed $N=32$ with $T=1,32,64$, as shown in Fig.~\ref{fig_Tau_vs_PFA_fixT} and Fig.~\ref{fig_Tau_vs_PFA_fixN}, respectively. 
The results show that, for $P_{\rm FA} < 10^{-1}$, which covers the typical operating range of practical false alarm probabilities, the empirical false alarm probabilities closely match the corresponding tighter AUB, thereby validating the theoretical derivation.
\begin{figure*}[htbp]
	\centering
	\subfigure[$T=8$]{
		\label{fig_Tau_vs_PFA_fixT}
		\includegraphics[width=2.3in]{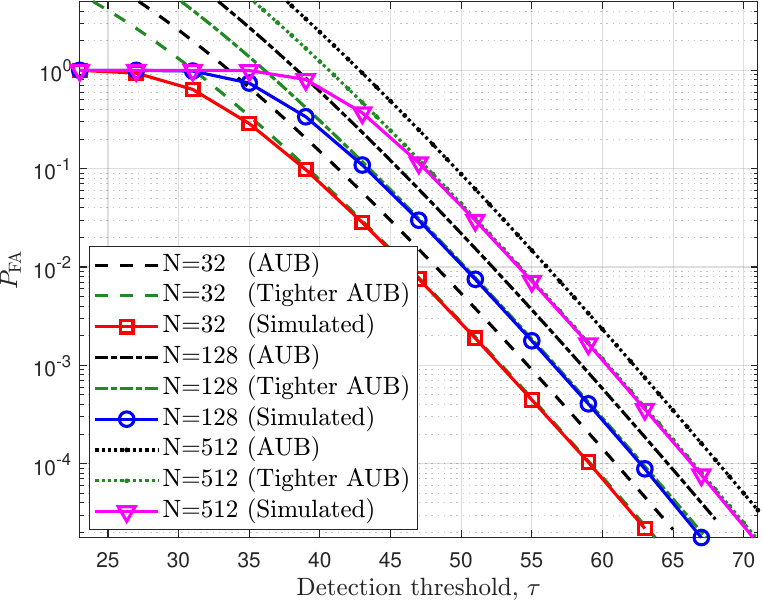}}
	\subfigure[$N=32$]{
		\label{fig_Tau_vs_PFA_fixN}
		\includegraphics[width=2.3in]{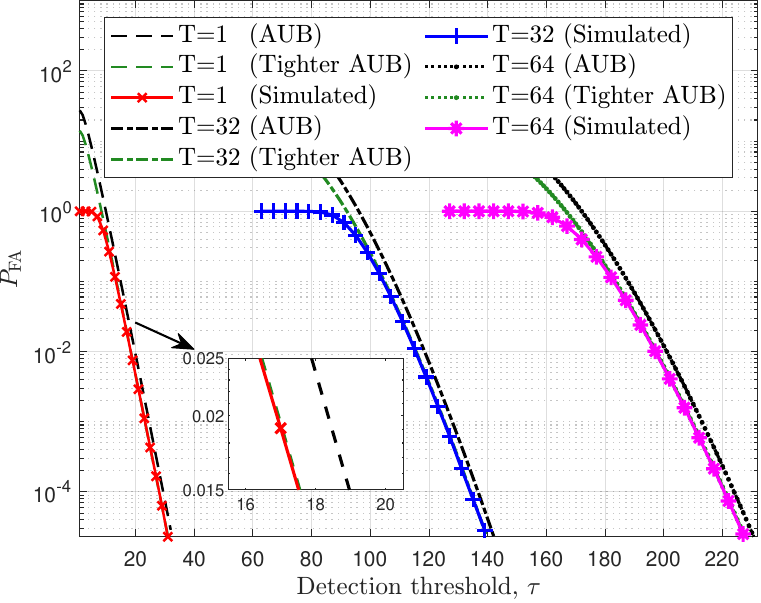}}
	\caption{Empirical false alarm probability and the theoretical tighter AUB versus the detection threshold. Each simulated curve is obtained using ${\rm MC}=10^{7}$ Monte Carlo trials.}
	\label{fig_Tau_vs_PFA}
\end{figure*}

\subsection{Analysis of SNR Gain}

As shown in Fig.~\ref{fig_NvsThreBiasTerm} and Fig.~\ref{fig_NvsNegativeRegion}, for $P_{\rm FA}=10^{-3}$ and $T=8$, the threshold-bias term (the first term of ${\rm SNR}_{\rm int}$ (\ref{SNRgain})) and the proportion of the negative-gain region are evaluated versus $N$ under three detection probabilities. The results show that the threshold-bias term increases monotonically with $N$ and tends to $0$ dB as $N \to \infty$. However, the convergence is extremely slow. Even at $N = 10^{10}$, it remains at $-0.19$ dB for $P_{\rm D}=0.9$, and the rate of improvement further diminishes as $N$ increases. Meanwhile, the proportion of the negative-gain region within $[0,\pi/N]$ gradually decreases with $N$ and tends to $0\%$ as $N \to \infty$, also with a very slow convergence rate. Increasing $P_{\rm D}$ improves the threshold-bias term and reduces the proportion of the negative-gain region, but the improvement is limited.

For fixed $P_{\rm FA}$ and $P_{\rm D}$, the number of snapshots $T$ affects the detection threshold and thus the threshold-bias term, as shown in Fig.~\ref{fig_TvsThreBiasTerm}. Increasing $T$ improves the threshold-bias term; however, the gain becomes marginal when $T > 50$. Fig.~\ref{fig_PFAvsThreBiasTerm} illustrates the effect of $P_{\rm FA}$. For practical values of $N$ ($N \le 8192$), reducing $P_{\rm FA}$ improves the threshold-bias term, but the improvement diminishes as $N$ increases and becomes negligible especially for $N=10^{10}$. These results indicate that, for finite $N$ and typical configurations of $P_{\rm FA}$, $P_{\rm D}$, and $T$, the threshold-bias term can approach but hardly attain the theoretical limit of $0$ dB. According to Proposition~\ref{bound_snrgain}, the SNR gain is therefore difficult to approach its upper bound due to the slow convergence of the threshold-bias term.
\begin{figure*}[htbp]
	\centering
	\subfigure[$P_{\rm FA} = 10^{-3}$, $T=8$]{
		\label{fig_NvsThreBiasTerm}
		\includegraphics[width=2.3in]{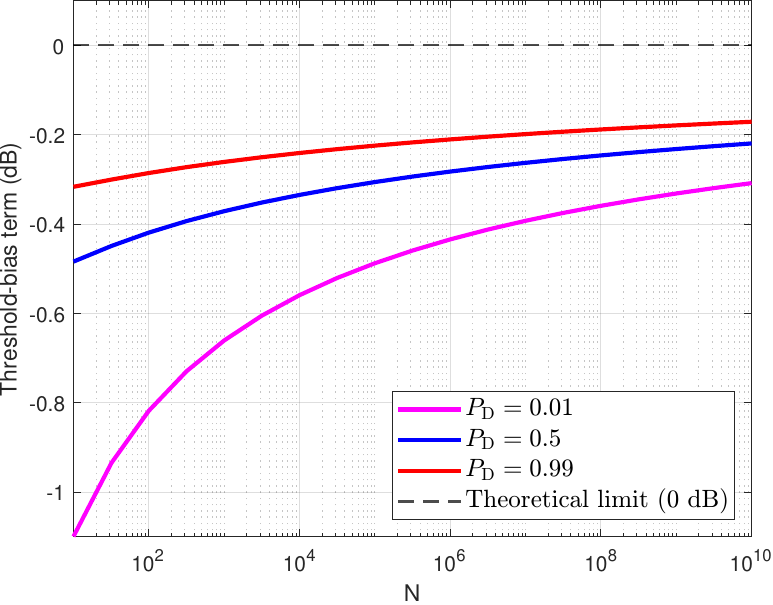}}
	\subfigure[$P_{\rm FA} = 10^{-3}$, $T=8$]{
		\label{fig_NvsNegativeRegion}
		\includegraphics[width=2.3in]{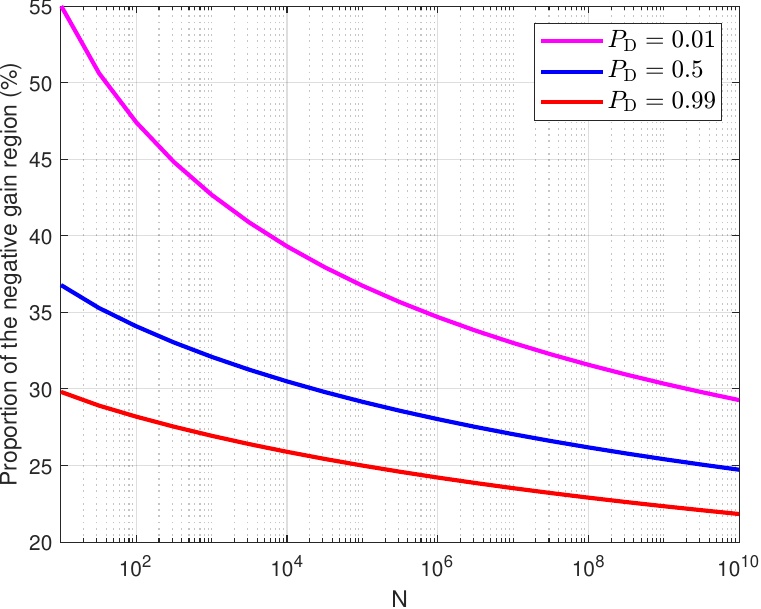}} \\
    \subfigure[$P_{\rm FA} = 10^{-3}$, $P_{\rm D} = 0.5$]{
		\label{fig_TvsThreBiasTerm}
		\includegraphics[width=2.3in]{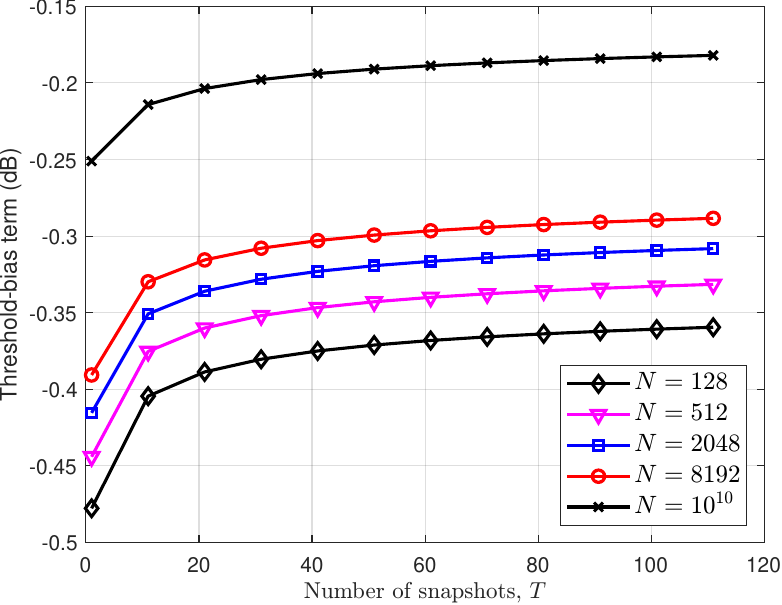}}
	\subfigure[$P_{\rm D} = 0.5$, $T=8$]{
		\label{fig_PFAvsThreBiasTerm}
		\includegraphics[width=2.3in]{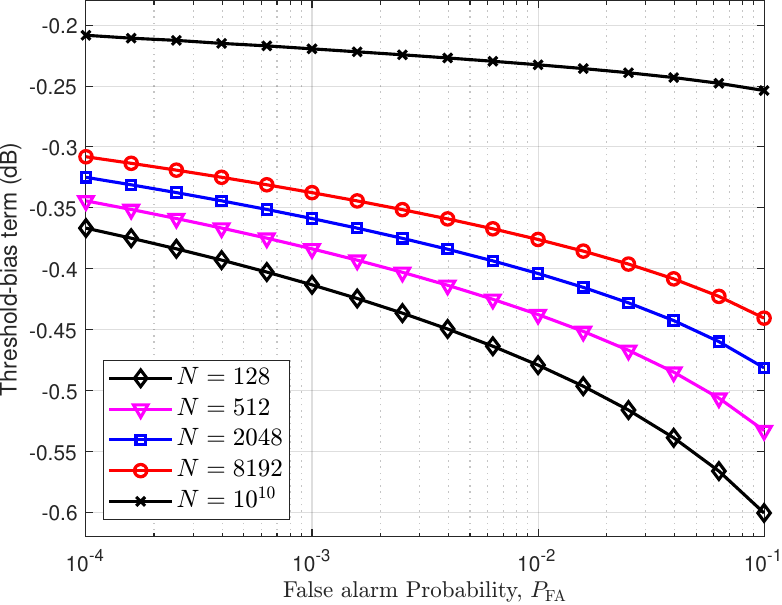}}
	  \caption{Threshold-bias term versus the number of measurements, number of snapshots and false alarm probability.}
	\label{fig_ThreBiasAnalysis}
\end{figure*}

As shown in Fig.~\ref{fig_DeltaOmegavsFreqMisTerm}, we further examine the convergence of the frequency-mismatch term (the second term of ${\rm SNR}_{\rm int}$ (\ref{SNRgain})) with respect to $N$ for a given normalized grid frequency mismatch error $\Delta \omega/(\pi/N)$. The results indicate that the frequency-mismatch term approaches its theoretical limit when $N \ge 15$. Specifically, for $N=15$ and $\Delta \omega = \pi/N$, we have $10 \log \frac{1}{\beta_{g^*}} = 3.91$ dB, which is very close to the theoretical limit of $3.92$ dB. As shown in Fig.~\ref{fig_DeltaOmegavsSNRgain}, with $P_{\rm FA}=10^{-3}$, $P_{\rm D}=0.5$, and $T=8$, the SNR gain and its upper bound are plotted versus the normalized grid frequency mismatch error for $N=128$ and $N=10^{10}$. For $N \gg 15$, the corresponding upper bounds nearly coincide, and the negative-gain region shrinks as $N$ increases. However, due to the slow convergence of the threshold-bias term with respect to $N$, an extremely large $N$ is required for the SNR gain to approach its upper bound. These results show that as the DFT grid becomes increasingly dense, the SNR gain of EMNOMP over MNOMP approaches the theoretical upper bound, and the region of negative gain shrinks.
\begin{figure*}[htbp]
	\centering
	\subfigure[Frequency-mismatch term]{
		\label{fig_DeltaOmegavsFreqMisTerm}
		\includegraphics[width=2.3in]{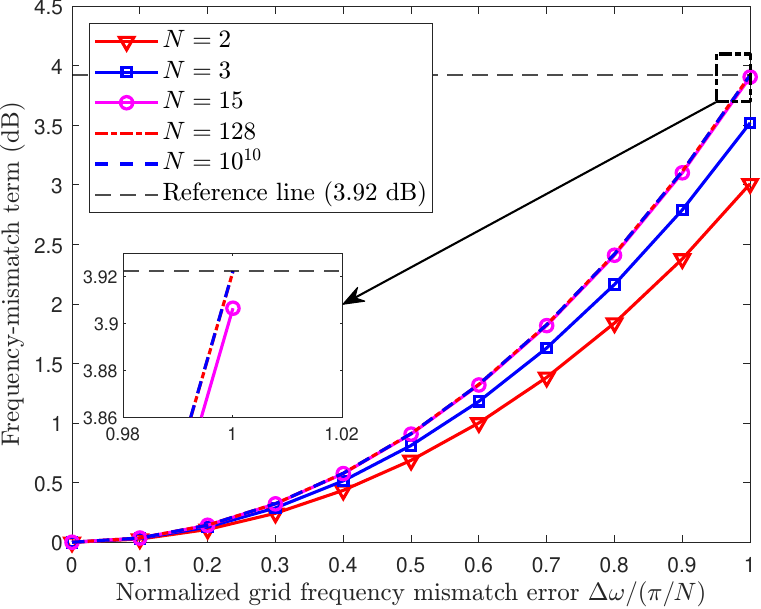}}
	\subfigure[$P_{\rm FA} = 10^{-3}$, $P_{\rm D} = 0.5$, $T=8$]{
		\label{fig_DeltaOmegavsSNRgain}
		\includegraphics[width=2.3in]{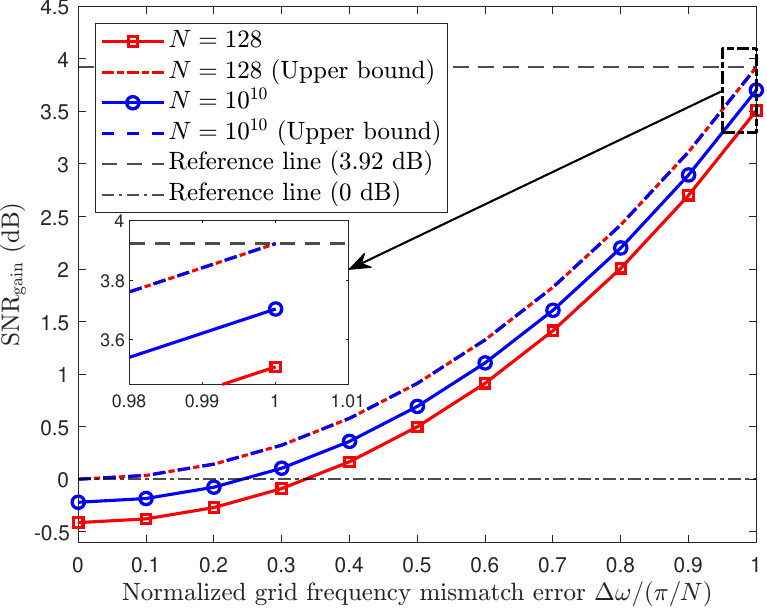}}
	\caption{Frequency-mismatch term and SNR gain versus the normalized grid frequency mismatch error.}
	\label{fig_SNRgainAnalysis}
\end{figure*}

\subsection{Detection Probability versus Integrated SNR}
\begin{figure*}[htbp]
	\centering
	\subfigure[False alarm probability $P_{\rm FA}$]{
		\label{pfavssnr}
		\includegraphics[width=2.3in]{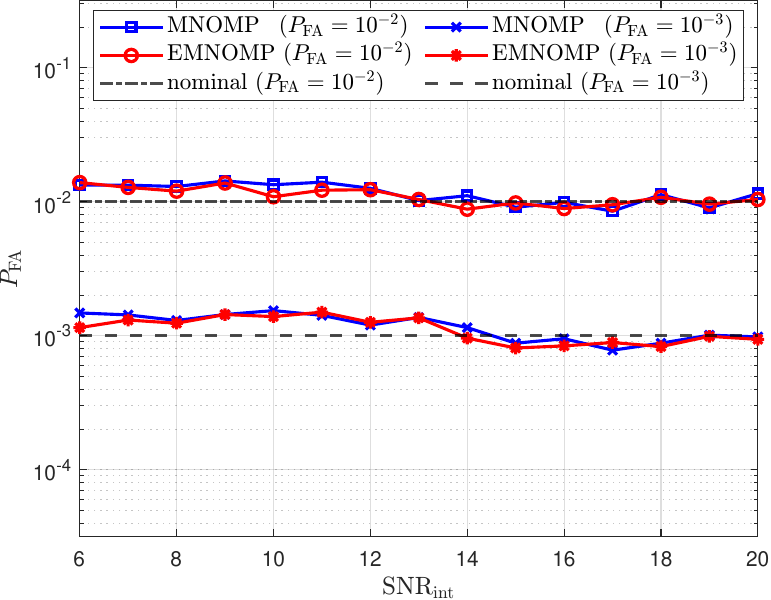}}
	\subfigure[Detection probability $P_{\rm D}$]{
		\label{pdvssnr}
		\includegraphics[width=2.3in]{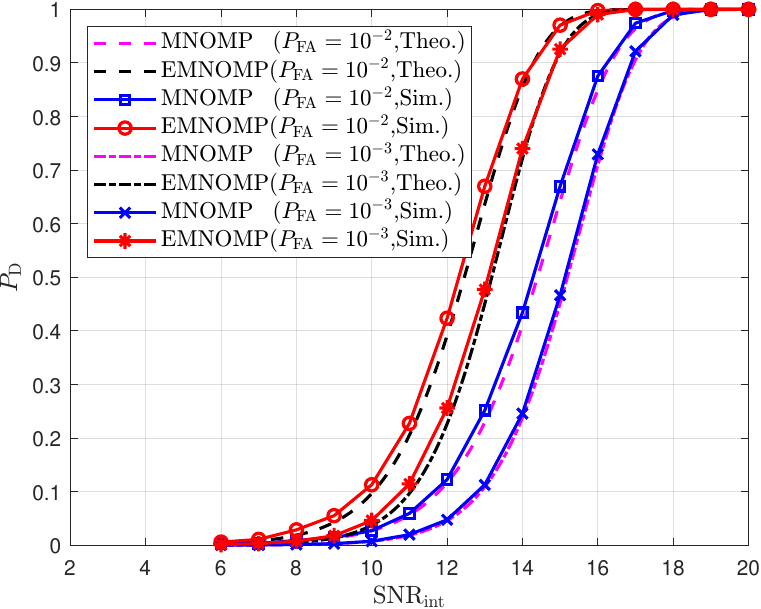}}
	\caption{False alarm probability and empirical/theoretical detection probabilities of the first target as functions of the integrated SNR. Parameters: $N=128$, $T=8$, $\sigma^2=1$, $K=3$, $[\omega_1,\omega_2,\omega_3]=[3.16,2.95,4.72]$, and the number of MC trials is ${\rm MC} = 100/P_{\rm FA}$. The integrated SNR of the first target varies from $6$ to $20$ dB, while those of the other two targets are fixed at $20$ dB. Both MNOMP (Algorithm~\ref{algorithm_MNOMP}) and EMNOMP (Algorithm~\ref{algorithm_EMNOMP}) use the same algorithmic parameters: $\gamma_{\rm os}=8$, $R_s=1$, and $R_c=2$.}
	\label{pfapdvssnr}
\end{figure*}
We evaluate the performance of EMNOMP in terms of the empirical false alarm probability and detection probability. A false alarm is declared when the minimum wrap-around distance between an estimated frequency and all true frequencies exceeds $\pi/N$. The empirical false alarm probability is computed as the ratio of the total number of false alarms to the number of MC trials.
The simulation parameters are set as follows: $N=128$, $T=8$, $K=3$, $\boldsymbol{\omega}=[3.16,2.95,4.72]$ and  $\sigma^2=1$. The first frequency $\omega_1 = 3.16$ is closest to the $65$th DFT grid, with a distance of $\Delta\omega = 0.4 \times \frac{2\pi}{N}$. The integrated SNR of the first target varies from $6$ dB to $20$ dB, while the integrated SNRs of the other two targets are fixed at $20$ dB. The noise variance is $\sigma^2=1$. MC simulations are conducted for $P_{\rm FA}=10^{-2}$ and $P_{\rm FA}=10^{-3}$. The theoretical detection probability of the first target is evaluated assuming perfect knowledge of the remaining sinusoids. The empirical false alarm probability and the empirical/theoretical detection probability are shown in Fig.~\ref{pfapdvssnr}, where each point is obtained from $100/P_{\rm FA}$ MC trials. As shown in Fig.~\ref{pfavssnr}, the empirical false alarm probabilities closely match the nominal values, demonstrating the end-to-end CFAR property of EMNOMP and MNOMP. As shown in Fig.~\ref{pdvssnr}, the empirical detection probabilities of both closely agree with the theoretical prediction. For a given $P_{\rm FA}$ and fixed integrated SNR, EMNOMP achieves a significantly higher detection probability than MNOMP, demonstrating the excellent performance of EMNOMP. From the simulation results in Fig.~\ref{pdvssnr}, to achieve a detection probability of $P_{\rm D}=0.5$, the integrated SNRs required by MNOMP and EMNOMP are $14.28$ dB and $12.31$ dB for $P_{\rm FA}=10^{-2}$, respectively, and $15.12$ dB and $13.09$ dB for $P_{\rm FA}=10^{-3}$, respectively. Therefore, the resulting SNR gains are ${\rm SNR}_{{\rm gain},1}'=14.28-12.31=1.97$ dB and ${\rm SNR}_{{\rm gain},2}'=15.12-13.09=2.03$ dB. These values are close to the theoretical results, i.e., ${\rm SNR}_{{\rm gain},1}=1.94$ dB for $P_{\rm FA}=10^{-2}$ and ${\rm SNR}_{{\rm gain},2}=2.01$ dB for $P_{\rm FA}=10^{-3}$, obtained from \eqref{SNRgain}. These simulation results agree well with the theoretical analysis and further demonstrate the performance advantage of EMNOMP over MNOMP.

\subsection{Detection Probability versus Number of Snapshots}
\begin{figure*}[htbp]
	\centering
	\subfigure[False alarm probability $P_{\rm FA}$]{
		\label{fig_T_vs_PFA}
		\includegraphics[width=2.3in]{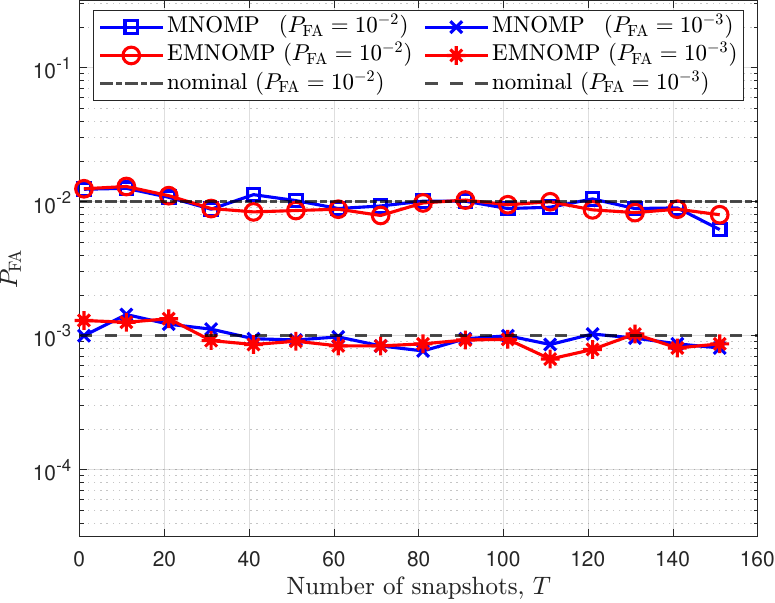}}
	\subfigure[Detection probability $P_{\rm D}$]{
		\label{fig_T_vs_PD}
		\includegraphics[width=2.3in]{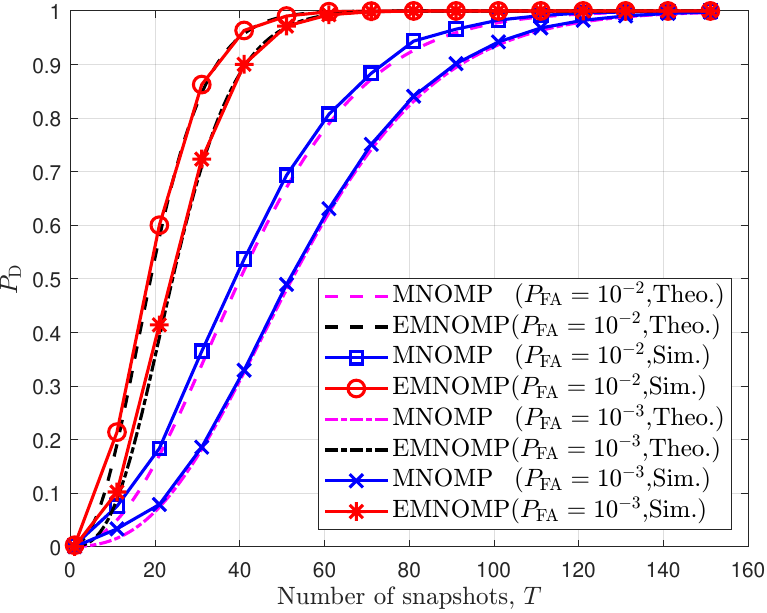}}
    \caption{False alarm probability and empirical/theoretical detection probabilities of the first target as functions of the number of snapshots. Parameters: $N=128$, $\sigma^2=1$, $K=3$, $[\omega_1,\omega_2,\omega_3]=[3.16,2.95,4.72]$, the per-snapshot integrated SNRs, defined as $N|x(t)|^2/\sigma^2$, are fixed at $[1,5,5]$ dB for the corresponding targets for all $T$, and the number of MC trials is ${\rm MC} = 100/P_{\rm FA}$. Both MNOMP (Algorithm~\ref{algorithm_MNOMP}) and EMNOMP (Algorithm~\ref{algorithm_EMNOMP}) use the same algorithmic parameters: $\gamma_{\rm os}=8$, $R_s=1$, and $R_c=2$.}
	\label{pfapdvsT}
\end{figure*}
When $T=1$, the proposed EMNOMP reduces to a new algorithm, termed enhanced NOMP (ENOMP), for the LSE\&D problem in the single-snapshot scenario. Since EMNOMP accommodates multiple measurement vectors, it is of interest to examine the impact of the number of snapshots on its performance. To this end, the per-snapshot integrated SNRs, defined as $N|x(t)|^2/\sigma^2$, are fixed, and the empirical false alarm and detection probability versus the number of snapshots are evaluated, as shown in Fig.~\ref{pfapdvsT}. The parameters are set as follows: $N=128$, $\sigma^2=1$, $K=3$, $\boldsymbol{\omega}=[3.16,2.95,4.72]$, the per-snapshot integrated SNRs are fixed at $[1,5,5]$ dB for the corresponding targets for all $T$, and ${\rm MC}=100/P_{\rm FA}$ Monte Carlo trials are conducted for $P_{\rm FA}=10^{-2}$ and $P_{\rm FA}=10^{-3}$. Similarly to Fig.~\ref{pfapdvssnr}, the false alarm and detection probabilities of the first target are evaluated. As shown in Fig.~\ref{fig_T_vs_PFA}, the empirical false alarm probabilities closely match the nominal values, further verifying the CFAR property of EMNOMP and MNOMP. As shown in Fig.~\ref{fig_T_vs_PD}, increasing the number of snapshots improves the detection performance, and EMNOMP consistently outperforms MNOMP. Specifically, EMNOMP achieves a detection probability close to one at $T=60$, whereas MNOMP requires at least $T=120$ to reach the same level, implying approximately twice the data resource. These results further demonstrate the superiority of EMNOMP over MNOMP in terms of detection performance.

\section{Conclusion}
This paper develops an EMNOMP algorithm that not only inherits the advantages of MNOMP, such as superresolution, high estimation accuracy, and CFAR-based detection, but also improves weak signal detection performance. The asymptotic upper bound of the false alarm probability for the continuous-domain GLRT is derived using chi-squared random field theory, and a corresponding closed-form threshold is obtained via the Lambert W function. The SNR gain of EMNOMP relative to MNOMP is quantitatively analyzed, revealing a theoretical maximum gain of $3.92$ dB under AWGN. Numerical simulations validate the theoretical analysis and demonstrate the effectiveness of EMNOMP.

%\nocite{*}
%\bibliography{references.bib} %bibfile_name
%\bibliographystyle{IEEEtran}

% \end{CJK} 
\end{document}